\newcommand\figurespath{figures/}
\newlength{\vpixdist}  
\let\csname equation*\endcsname\relax
\let\csname endequation*\endcsname\relax
\renewcommand{\arraystretch}{1.2}  
\crefname{figure}{fig\@.}{figs\@.} 
\Crefname{figure}{Fig\@.}{Figs\@.}
\crefname{section}{sec\@.}{secs\@.} 
\Crefname{section}{Sec\@.}{Secs\@.}
\crefname{equation}{eq\@.}{eqs\@.} 
\Crefname{equation}{Eq\@.}{Eqs\@.}
\NewDocumentCommand\rrt{g}{\ensuremath{T_{RR\IfNoValueTF{#1}{}{#1}}}}
\newcommand{\audic}{\ifmmode{}D\else{}\textsc{a}u\textsc{d}i\textsc{c}\fi}
\newcommand\inv{^{-1}}
\newcommand\p{^\prime}
\newcommand\R{\mathbb{R}}
\newcommand{\attr}{\mathcal{A}}
\newcommand{\basin}{\mathcal{B}}
\newcommand{\basinA}{\basin_\attr}
\newcommand{\id}{\mathbb{1}}
\renewcommand{\Re}{\text{Re}}
\newlength{\elementwidth}
\newlength{\elementwidthtwo}
\newtheorem{theorem}{Theorem}[section] 
\newtheorem{definition}[theorem]{Definition} 
\newtheorem{proposition}[theorem]{Proposition} 
\newtheorem{lemma}[theorem]{Lemma}
\begin{document}

\title[Timing of Transients]{Timing of Transients: Quantifying Reaching Times and Transient Behavior in Complex Systems}

\author[pik]{Tim Kittel$^{1,2}$, Jobst Heitzig$^1$,  Kevin Webster$^1$, Jürgen Kurths$^{1,2,3}$}

\address{%
	$^1$ Potsdam Institute for Climate Impact Research, 
	Telegrafenberg A31 - (PO) Box 60 12 03,
	14412  Potsdam, Germany
}%
\address{%
	$^2$ Institut für Physik, Humboldt-Universität zu Berlin, 
	Newtonstraße 15,
	12489 Berlin, Germany
}%
\address{%
	$^3$ Institute for Complex Systems and Mathematical Biology, University of Aberdeen, 
	Aberdeen AB24 3UE, United Kingdom
}
\ead{Tim.Kittel@pik-potsdam.de}

\vspace{8pt}

\begin{indented}
	\item[\textbf{Submitted on:}]\ soon
\end{indented}

\begin{abstract}
\noindent
When quantifying the time spent in the transient of a complex dynamical system, the fundamental problem is that for a large class of systems the actual time for reaching an attractor is infinite. Common methods for dealing with this problem usually introduce three additional problems: non-invariance, physical interpretation, and discontinuities, calling for carefully designed methods for quantifying transients.
	
In this article, we discuss how the aforementioned problems emerge and propose two novel metrics, \emph{Regularized Reaching Time} (\rrt{}) and \emph{Area under Distance Curve} (\audic{}), to solve them, capturing two complementary aspects of the transient dynamics.
	
\rrt{} quantifies the additional time (positive or negative) that a trajectory starting at a chosen initial condition needs to reach the attractor after a reference trajectory has already arrived there. A positive or negative value means that it arrives by this much earlier or later than the reference.	
Because \rrt{} is an analysis of return times after shocks, it is a systematic approach to the concept of critical slowing down \cite{lenton2011early}; hence it is naturally an early-warning signal \cite{scheffer2009early} for bifurcations when central statistics over distributions of initial conditions are used.

	
\audic{} is the distance of the trajectory to the attractor integrated over time. Complementary to \rrt{}, it measures which trajectories are \emph{reluctant}, i.e.\ stay away from the attractor for long, or \emph{eager} to approach it right away.
	
Four paradigmatic examples have been chosen in order to display the different features of these novel metrics and their relations: a linear system, a global carbon cycle model \cite{anderies2013topology}, a generator in a power grid \cite{yuan2003solution} and the chaotic Rössler attractor \cite{rossler1976equation}. While the linear system can be solved analytically, we demonstrate our efficient algorithms for the three non-linear examples using the fact that the metrics are Lyapunov functions \cite{giesl2007construction}. New features in these models can be uncovered, including the surprising regularity of the Rössler system's basin of attraction even in the regime of a chaotic attractor. Additionally, we demonstrate the \emph{critical slowing down} interpretation by presenting the metrics' sensitivity to prebifurcational change and thus how they act as \emph{early-warning signals}.

\vspace{10pt}
\begin{description}
	\item[Submitted to:]\ \NJP
	\vspace{8pt}
	\item[PACS:]\ 05.45.-a, 02.30.Hq
	\vspace{8pt}
	\item[Keywords:]\ Complex Systems, Nonlinear~Dynamics, Long~Transients, Stability~against~Shocks, Ordinary~Differential~Equations, Early-Warning~Signals
\end{description}
\end{abstract}




\maketitle

 
\ioptwocol

\section{Introduction}
	
%
%
%
	
	In complex dynamical systems, the importance of a trajectory's transient, i.e.\ the part of the trajectory ``away'' from the attractor, plays an important role in physics research, e.g. for lasers \cite{fiutak1980transient,tang1975transient}, the dynamical Ising model \cite{barkema1994transient} and other parts of statistical physics \cite{castellano2009statistical,chowdhury2000statistical,krapivsky2010kinetic} as well as in various other fields, including ecology \cite{van2007long,hastings2004transients}, biology \cite{schaffer1993transient}, economics \cite{fisher1989disequilibrium}, medicine \cite{fischer2005epileptic} and climate change \cite{lenton2011early,anderies2013topology} with specific focus on \emph{long transients} in \cite{van2007long,anderies2013topology,van2016constrained}. Hastings~\cite{hastings2004transients} stresses the importance of different time scales and points out how the transient dynamics can be very different and much more interesting than the asymptotic behavior. In addition, he points out how saddles play a central role by inducing long transients.
	
	In this article, we devise novel metrics that measure how long it takes to reach the system's attractor to foster the study of transients.
		
	Even though common methods for that exist, they are confronted with four essential problems.
	(I) \emph{divergences}: the attractor is not reached in finite time for a large class of physically relevant systems; 
	(II) \emph{physical interpretation}: when using $\epsilon$-neighborhoods the results depend strongly on the choice of $\epsilon$ and similarly for other parametrized methods;
	(III) \emph{discontinuities}: small changes in the parameter often have a large (noncontinuous) effect on the measured time;
	and (IV) \emph{non-invariance}: the results depend on the choice of variables. Problem (IV) is particularly important, as a result should be a property of the dynamical system and thus independent of the choice of variables, i.e.\ invariant (or correctly transforming) under smooth transformations of the state space (cf. ``smoothly equivalent'' in \cite{kuznetsov2013elements}).
	
	As these problems are fundamental and have not yet been solved, we do not aim to just improve a quantitative measure but actually provide new solutions to general properties and consistency requirements.
	
	The two novel metrics proposed in this article, the \emph{Regularized Reaching Time} (\rrt) and the \emph{Area under Distance Curve} (\audic), have been developed in order to treat the aforementioned problems. The first is defined by the difference of the reaching times with respect to a reference trajectory and thus actually measures a time. The idea is that even though the actual reaching times diverge (problem (I)), the difference converges and we can compare which trajectories reach the attractor \emph{earlier} or \emph{later}. The second one, \audic, is the distance to the attractor integrated along the trajectory. This means that it takes a different point of view and measures which trajectories are \emph{reluctant}, i.e.\ stay away from the attractor for long, or \emph{eager}, i.e.\ approach it right away.
	
	Both metrics are shown to be Lyapunov functions and this property of \audic{} is used for the computation. In the outlook, we even suggest this property to be the basis of an improved definition of \rrt{}.
	
	When applying these metrics to the global carbon cycle model \cite{anderies2013topology} and the chaotic Rössler oscillator \cite{rossler1976equation}, their potential as \emph{early-warning signals} \cite{scheffer2009early,lenton2011early} becomes apparent. Statistics of their distributions in state space represent the system's \emph{critical slowing down} (CSD) \cite{scheffer2009early,scheffer2012anticipating,lenton2011early} after a shock, i.e.\ an instantaneous and non-infinitesimal perturbation, uncovering prebifurcational changes in the transient behavior. In contrast, CSD is usually done with (local) noise only. The usage of shocks has been developed in the context of Basin Stability \cite{menck2013basin,menck2014dead} and its extensions \cite{klinshov2015stability,hellmann2015survivability,van2016constrained,mitra2015integrative}.
	
	With this new approach, we have been able to uncover new features of the systems: the basin of attraction in the chaotic Rössler system is unexpectedly regular and the basin separation in the carbon cycle is due to the strong stable manifold acting as a separatrix induced by a saddle, demonstrating the idea how saddles can lead to long transients. Additionally, we show how the metrics work well as \emph{early-warning signals} by detecting the prebifurcational changes.
	
	The remainder of this article is structured as follows. After stating the fundamental problems of reaching time definitions in \Cref{sec:problems-reaching-time}, we present two complementary solutions in \Cref{sec:novel-metrics} and apply them to examples in \Cref{sec:examples} before concluding with a discussion and some outlook in \Cref{sec:discussion}. The Appendix contains more technical comments, calculations and additional information. Moreover, \ref{app2-sec:definitions-and-theorems} contains mathematical definitions and proofs, putting the ideas presented in the main text on solid footing.

\section{The problems of reaching time definitions}
\label{sec:problems-reaching-time}

\begin{figure}
	\includegraphics[width = \columnwidth]{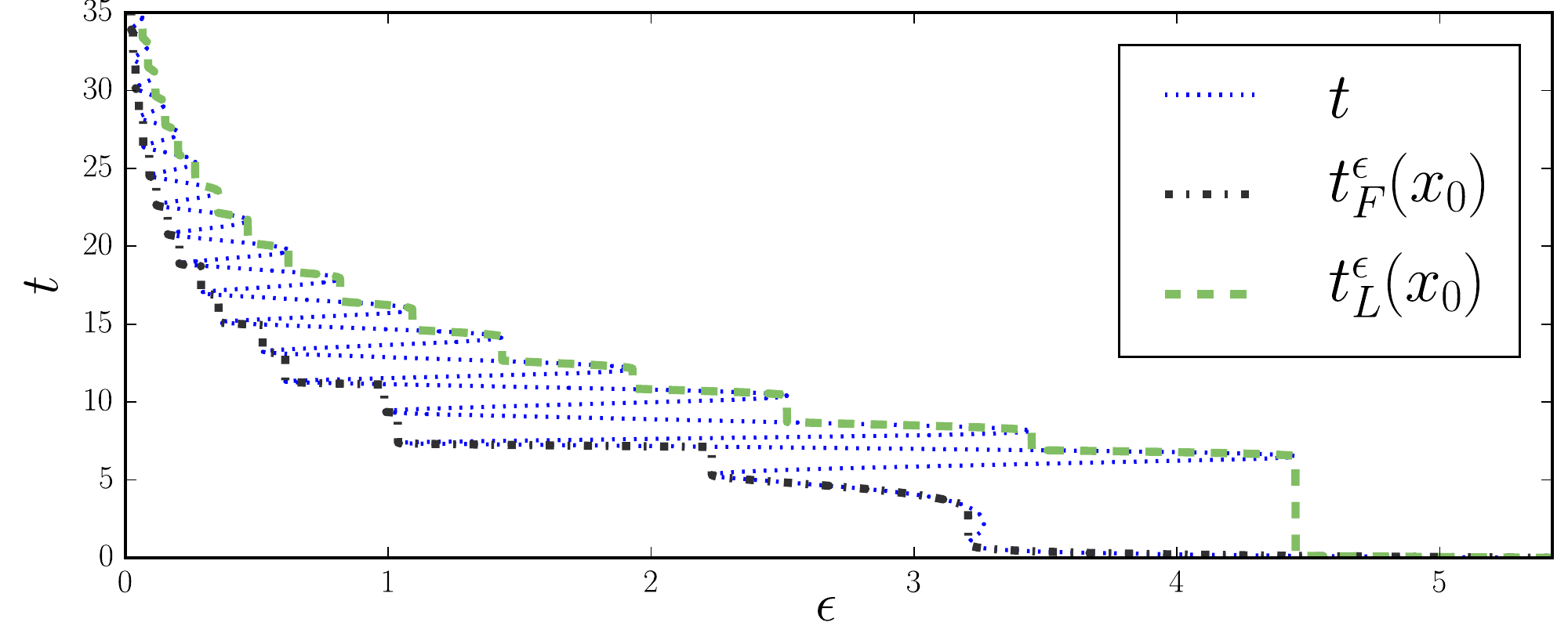}
	\caption{(color online) The distance (x-axis) over time (y-axis) for an example system with a stable, spiraling fixed point and a saddle, chosen to show the occurrence of problems (I)-(IV) for common $\epsilon$-dependent definitions of reaching times, $t^\epsilon_1(x_0)$ and $t^\epsilon_L(x_0)$, as discussed in the text.}
	\label{fig:eps-t}
\end{figure}

%
%
%

	A basic property of a large class of complex systems is that trajectories reach the attractor in infinite time only, even for steady states or limit cycles and generally most systems of ordinary differential equations with smooth right hand side functions. 
		 
	Two common metrics that work around this simply measure when the trajectory starting at $x_0$ enters an $\epsilon$-neighborhood around the attractor the first time ($t^\epsilon_F(x_0)$) or the last time ($t^\epsilon_L(x_0)$), i.e.\ when the neighborhood is not left anymore. In \Cref{fig:eps-t}, the (Euclidean) distance $\epsilon$ (x-axis, dotted light blue line) vs. the time $t$ (y-axis) of a basic example system's trajectory have been plotted (more details and explanations in \ref{app1-sec:divergence-problem}); $t^\epsilon_F(x_0)$ and $t^\epsilon_L(x_0)$ have been added.
	
	Firstly, we observe the divergence of $t$ for $\epsilon \rightarrow 0$, corresponding to the actual time until reaching the attractor (problem (I)).
	
	Because $t$ diverges, $t^\epsilon_F(x_0)$ and $t^\epsilon_L(x_0)$ depend heavily on the choice of $\epsilon$ so that a proper physical interpretation is rather difficult (problem (II)). It is far from being obvious what ``close'' or ``when the transient is over'' means. Thirdly, the strong discontinuities for $t^\epsilon_F(x_0)$ and $t^\epsilon_L(x_0)$ in \Cref{fig:eps-t} make the choice of a proper $\epsilon$ even harder (problem (III)). 
	
	Problem (IV) from the introduction is non-invariance: Using a different set of variables, i.e.\ smoothly transforming the system, gives different values for $t^\epsilon_F(x_0)$ and $t^\epsilon_L(x_0)$, because the Euclidean distance is not invariant. This means the results are not a genuine property of the dynamical system but just of the representation.
	
	Other metrics are based on characteristic times \cite{nolting2011grundkurs} and Lyapunov exponents \cite{cvitanovic2016chaos}. Though common, the former suffer the same problems as the approaches above and are constant for a 1-dimensional linear system which is counter-intuitive when thinking about reaching times. The latter share these problems but are invariant under changes of variables, i.e.\ they are physical in that sense. Note that Lyapunov exponents do not capture the transient at all but are an asymptotic feature of the system only.
	
	An extended discussion of these problems including an exemplary model is given in \ref{app1-sec:divergence-problem}.

\section{Two novel, complementary solutions}
\label{sec:novel-metrics}

	To overcome the above problems, we introduce two metrics: \emph{Area under Distance Curve} (abbreviated as \audic{}, $\audic{}$) and \emph{Regularized Reaching Time} (\rrt), and show that they naturally lead to a transient analysis from separate points of view.
		
	(i) \emph{Area under Distance Curve} ($\audic{}$) comes from the idea that a trajectory stays far away from the attractor during the transient while it is close in the asymptotics. A distance function $d(\cdot, \cdot)$ is needed to have notions of ``far'' and ``close'' and we define \audic{} as 
	\begin{align}
	D(x_0) = \int_{0}^{\infty} \text{d}t\ d(x(t),\, &\attr) \qquad \text{with } x(0) = x_0,    \label{eq:audic-definition}
	\end{align}
	where $\attr$ is the attractor with the basin $\basinA$ and $x(t)$ the trajectory. Hence, we look at the cumulative distance and remove the influence of the asymptotics.
	
	Note the strong difference of $D$ to $t^\epsilon_F(x_0)$ and $t^\epsilon_L(x_0)$. Both of the latter are very sensitive to small changes of the distance function around the attractor and to $\epsilon$; and it is difficult to even find a sensible notion of distance. In the \audic{} measure $D$ instead, choosing a tailor-made distance function $d(\cdot, \cdot)$ allows to adapt the measure to specific research questions, e.g., by letting $d(x,\attr)$ represent some form of costs or damages due to being away from the attractor. This approach solves problem (IV), too, because the distance function is transformed correctly.
		
	Initial conditions with high values of \audic{} are called \emph{reluctant} and low values \emph{eager}. This terminology is used to emphasize that \emph{reluctant states} go through large transients far away from the attractor, while \emph{eager states} approach it ``right away''.
	
	In the case where $\mathcal{A}$ is a hyperbolic fixed point of an ODE  $\dot{x} = f(x)$, we show under certain mild conditions on $d(\cdot, \cdot)$ that \audic{} is a \emph{Lyapunov function} \cite{giesl2007construction} uniquely defined by
	$D(\mathcal{A}) = 0$
	and
	$\frac{d}{dt}D(x) = - d(x,\mathcal{A})$, where 
	$\frac{d}{dt}D(x) =  \left(\nabla D(x)\right)^T \cdot f(x)$
	is the orbital derivative. Thus, the level sets of $D$ foliate the basin of attraction and are forward invariant under the flow, which will be used for the definition of \rrt{}.
	
	Proofs and further properties are given in \ref{app1-sec:1d-systems}, \ref{app1-sec:conv-audic} and \ref{app2-sec:AUDIC}, incl. a convergence discussion of \Cref{eq:audic-definition}.

	(ii) \emph{Regularized Reaching Time} (\rrt{}) is based on time differences between trajectories. It can be interpreted as the additional time (positive or negative) that a trajectory starting at a point of interest $x_0$ needs to reach the attractor after a reference trajectory has already arrived there. A positive or negative value means that the trajectory at hand arrives by this much later or earlier, respectively, at the attractor than the reference trajectory does. Since the actual reaching times are both infinite, \rrt{} is formally defined as the limit for $\epsilon \longrightarrow 0$ of the difference between how long the trajectory at hand and the reference trajectory need to reach the \audic{} level set with value $\epsilon$.

	This idea is put in equations as follows. First we define the time $t^\text{\audic}\left(x_0, \epsilon\right) :=t$ where $\epsilon = D\left(x\left(t\right)\right)$ and $x(t) = x_0$. Note that the forward invariance of \audic{} provides uniqueness of $t^\text{\audic}$. Next, we choose the initial condition $x_{ref}$ of the reference trajectory. And finally, we define the \emph{Regularized Reaching Time} by taking the limit
	\begin{align}
	\rrt (x_0) := & \lim_{\epsilon \rightarrow 0} \left(t^\text{\audic}\left(x_0, \epsilon\right) - t^\text{\audic}\left(x_{ref}, \epsilon\right) \right). \label{eq:def-rrt}
	\end{align}
	While the existence of this limit is far from obvious, we can prove it under mild conditions for an important class of systems: namely hyperbolic fixed points (see \ref{app2-sec:RRT}). Problem (IV), non-invariance, is circumvented because for any smooth invertible coordinate transformation $\Phi$ the equation $\rrt (\Phi(x_0)) = T\p_{RR}(x_0)$ holds (also \ref{app2-sec:RRT}), where $T\p_{RR}$ is the regularized return time \rrt{} in the system with changed variables.
 	
 	Our numerical results indicate that this idea is sensible for more complex attractors, too, particularly the limit cycle discussed in \Cref{sec:swing} and the chaotic Rössler system below.
	 
	\rrt{} represents \emph{the actual time} by how much a trajectory reaches the attractor \emph{later} or \emph{earlier} than the one starting at the reference point. Different choices of $x_{ref}$ result in additive constants such that $\rrt  \left(x_{ref}\right) = 0$, but do not affect the structure of $\rrt$. Thus central statistics, i.e.\ ones invariant under shifts, are sensible; especially the standard deviation proves useful for the examples below.
	
	Note that we used the \audic{} level sets in the definition for \rrt{} as they are parametrized, bounded, forward invariant foliations. Foliations like this are usually hard find or compute but for these particular ones an efficient algorithm was developed (see \ref{app1-sec:algo}) and their computation can be done even for chaotic systems. Also, the usage of \audic{} avoids local geometric measures that can easily induce problem (IV) (non-invariance).
	
	We find also that \rrt{} is a Lyapunov function, this time with constant (negative) orbital derivative $-1$; and in \ref{app2-sec:RRT} we show that for hyperbolic fixed-point attractors, \rrt{} is a time-parametrization of the strong stable foliation $\mathcal{F}_{ss}$ in $\basinA$ (see \Cref{app2-thm:Fss}) with respect to a reference leaf. These properties could be used in order to find an improved definition of \rrt{} (as discussed later). It follows that \rrt{} diverges to $-\infty$ as it approaches $\mathcal{A}$ or its strong stable manifold $W^{ss}(\mathcal{A})$ (if it exists) which is the manifold associated to the smaller Lyapunov exponents (precise definition in \cite{hirsch1977invariant}). This implies that during the phase space estimation $W^{ss}(\mathcal{A})$ becomes visible as e.g. in the carbon cycle example below.
	(See \ref{app1-sec:1d-systems}, \ref{app1-sec:conv-rrt} and \ref{app2-sec:RRT} for precise definitions, proofs and further properties.)
	
	In contrast to the convergence of the distance between trajectories in isochrons \cite{mauroy2013isostables,josic2006isochron} this work focuses on the distance to the attractor, giving rise to this timing of transients.	

\section{Examples}
\label{sec:examples}

	In order to demonstrate the applicability of the metrics, we selected four examples with differing properties. They have been chosen with increasing levels of complexity and to show different properties of the new metrics. Note that 1-dimensional systems can be solved analytically and we discuss an additional example in \ref{app1-sec:1d-systems}.
	
	\bgroup
	\setlength{\elementwidth}{0.28\textwidth}
	\setlength{\elementwidthtwo}{\elementwidth}
	\setlength{\tabcolsep}{0pt}
	\setlength\vpixdist{-4.4mm}
	\newcolumntype{C}[1]{>{\centering\let\newline\\\arraybackslash\hspace{0pt}\vspace{\vpixdist}}m{#1}}
	\renewcommand*{\arraystretch}{0.2}
	\begin{figure*}
		\centering
		\vspace{-5mm}
		\subfloat{\includegraphics[width = 0.49\textwidth]{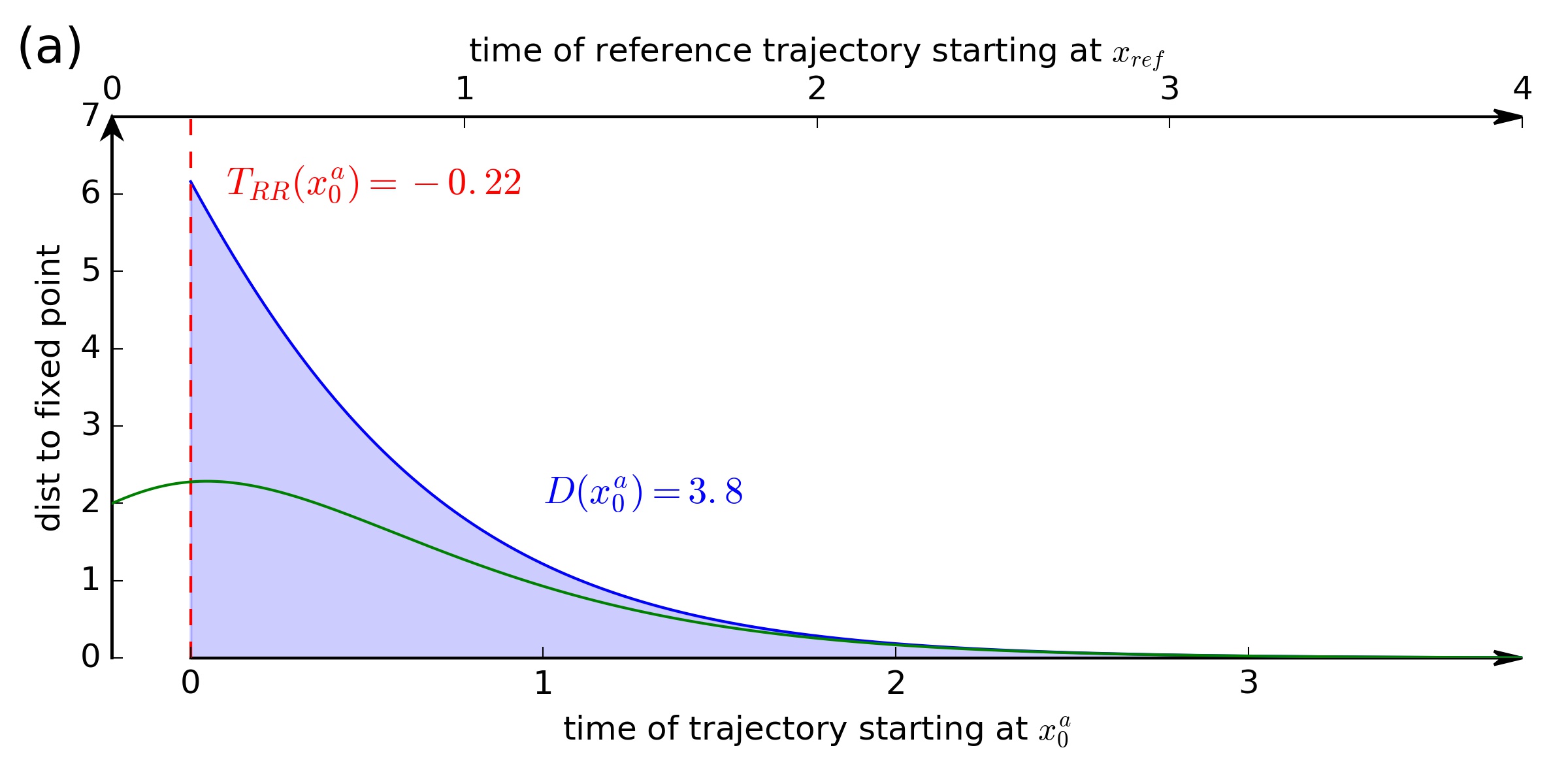}}
		\hfill
		\subfloat{\includegraphics[width = 0.49\textwidth]{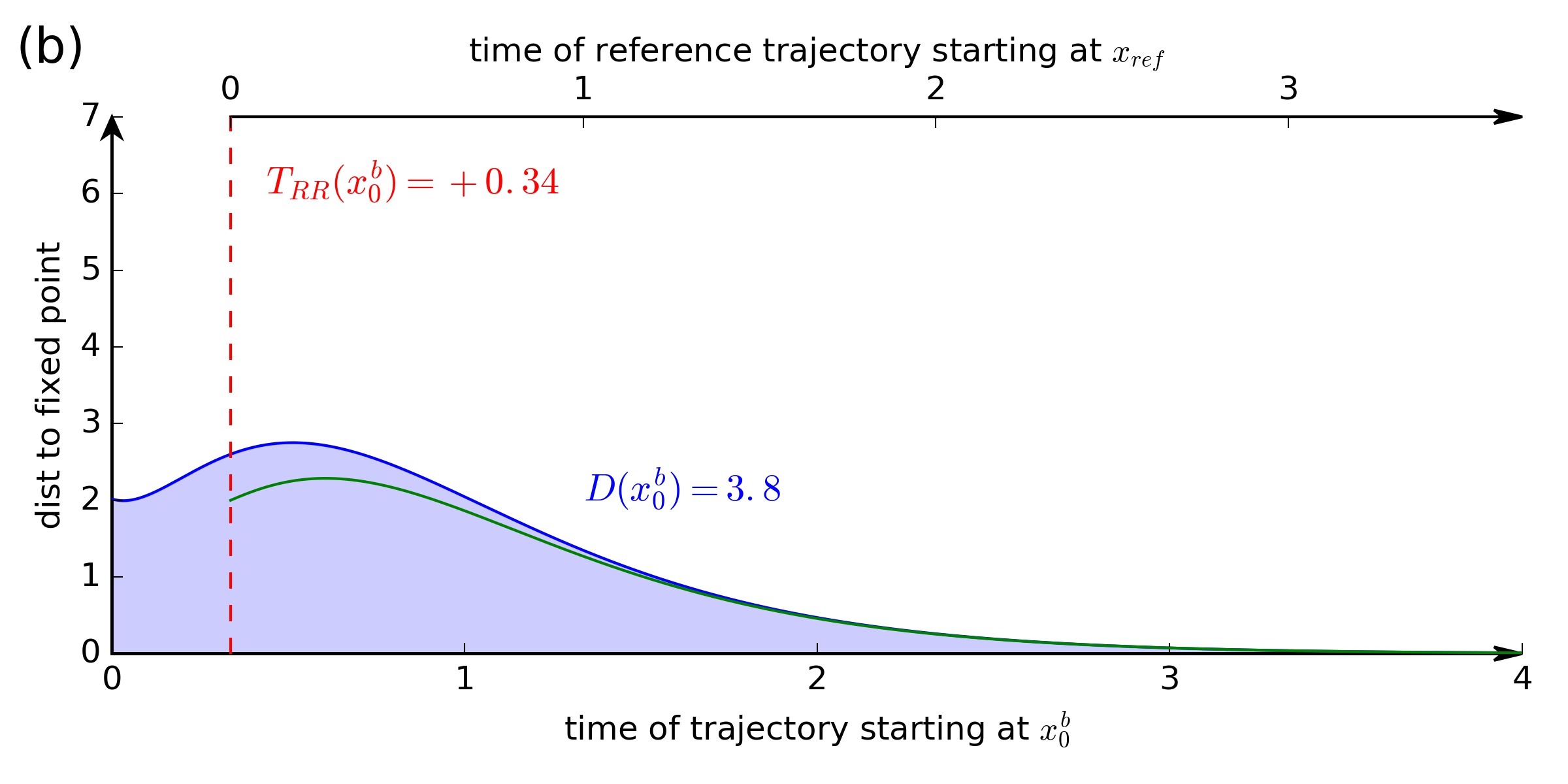}}\\
		\vspace{-5mm}
		\caption{(color online) The figure shows for two exemplary initial conditions (a) $x^a_0 = (0.8, 2.35)^T$ and (b) $x^b_0=(1.4, 0.24)^T$ the distance of the attractor over time (blue curve) in the linear example system of \Cref{sec:linear-system}. The initial conditions have been chosen such that the \audic{} value, which corresponds to the blue-shaded area, is the same for both trajectories, $\audic(x^a_0) = \audic(x^b_0) = 3.8$. But the trajectory starting at $x^a_0$ arrives \emph{earlier} than the reference trajectory (green in (a) and (b)), which in turn is \emph{earlier} than the one from $x^b_0$, meaning $\rrt{}(x^a_0) = -0.22 < \rrt{}(x_{ref}) = 0 < \rrt{}(x^b_0) = +0.34$. In order to show this, the example trajectories (blue) have been shifted in each plot by the value of \rrt{} with respect to the reference trajectory. This demonstrates the intuition behind \rrt{}: it describes by how much one has to shift one trajectory so it matches the asymptotics of the reference trajectory. The proofs in \ref{app2-sec:definitions-and-theorems} provide that this is always possible for a generically chosen reference point $x_{ref}$.}
		\label{fig:linear-example}
		\vspace{-2mm}
		
		\begin{tabular}[width=30cm]{C{0.31\textwidth} C{0.31\textwidth} C{0.27\textwidth}}
			\subfloat{\includegraphics[width = \elementwidth]{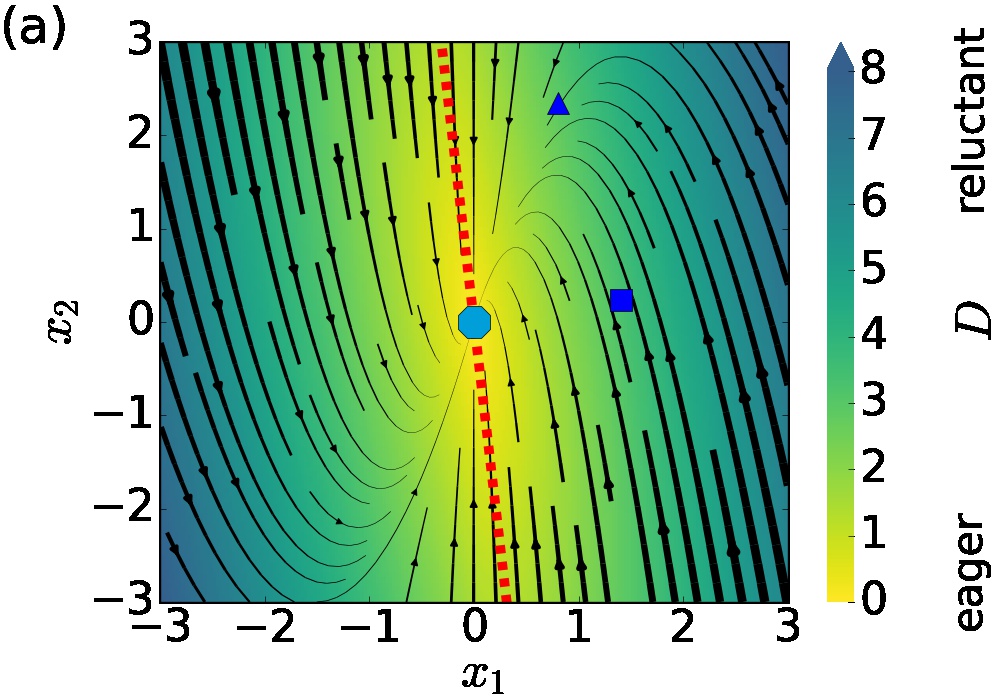}\label{fig:linear-nsp-audic}} & 
			\subfloat{\includegraphics[width = \elementwidthtwo]{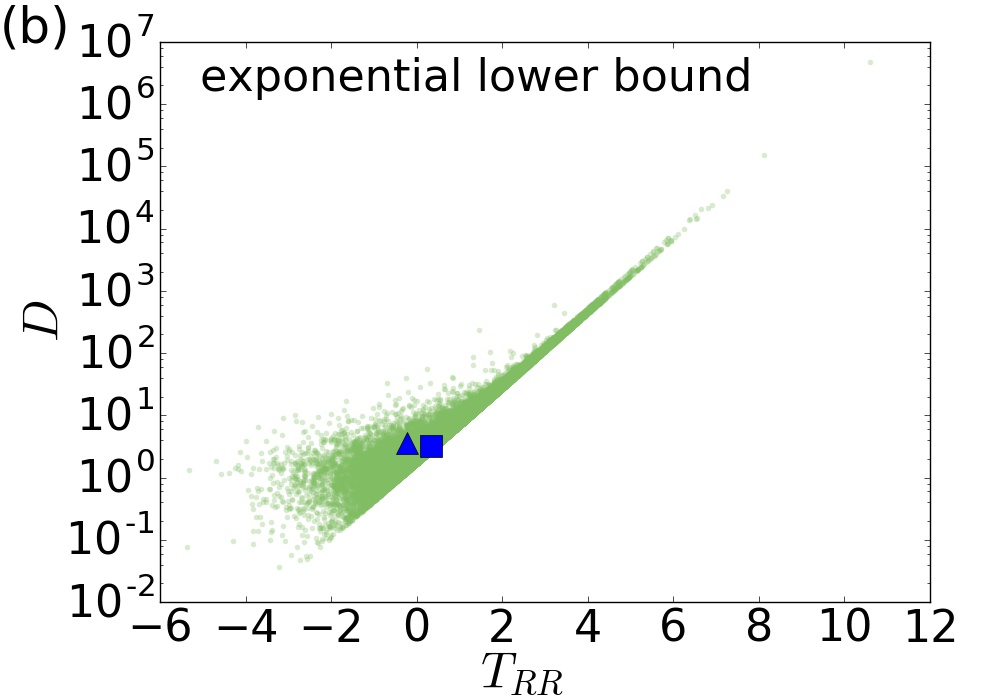}\label{fig:linear-nsp-scatter}} &
			\subfloat{\includegraphics[width = \elementwidth]{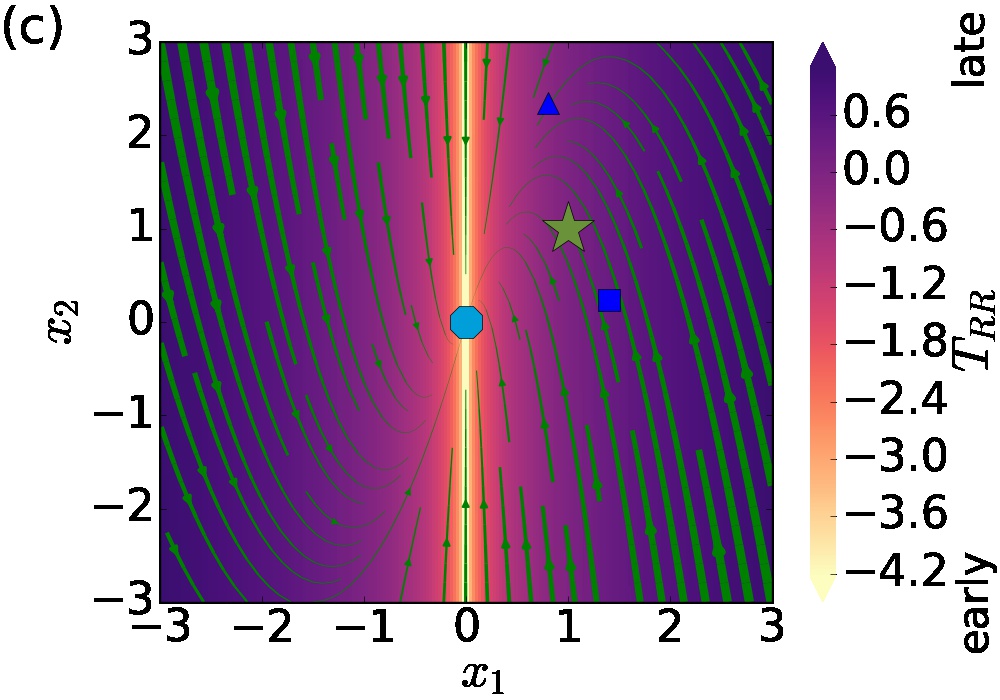}\label{fig:linear-nsp-rrt}} \\ 
			\vspace{\vpixdist}			
			\subfloat{\includegraphics[width = \elementwidth]{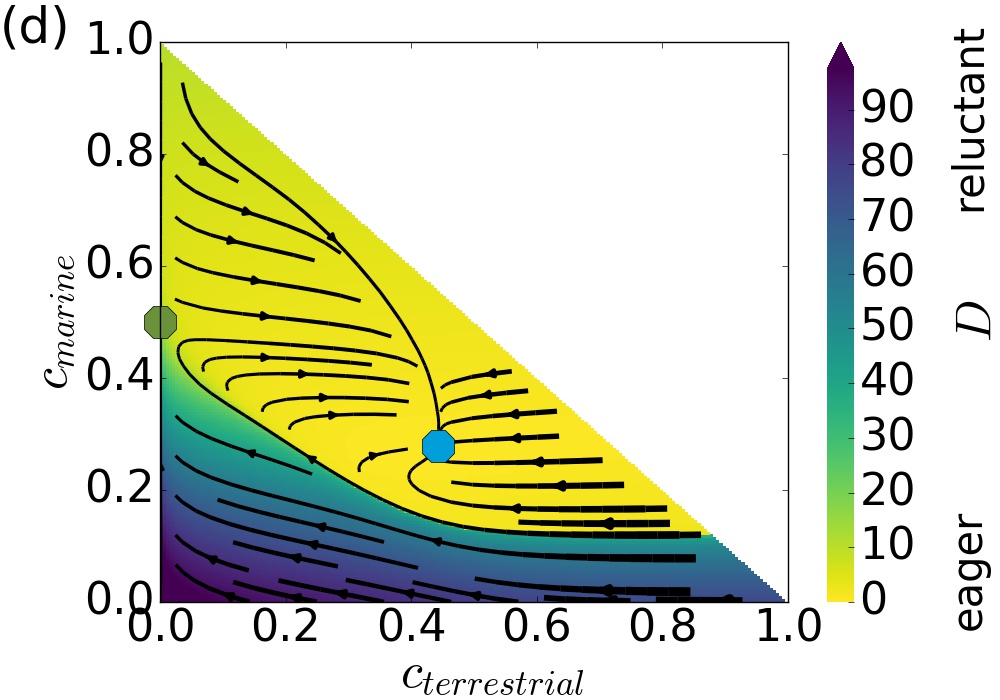}\label{fig:anderies-audic}} & 
			\subfloat{\includegraphics[width = \elementwidthtwo]{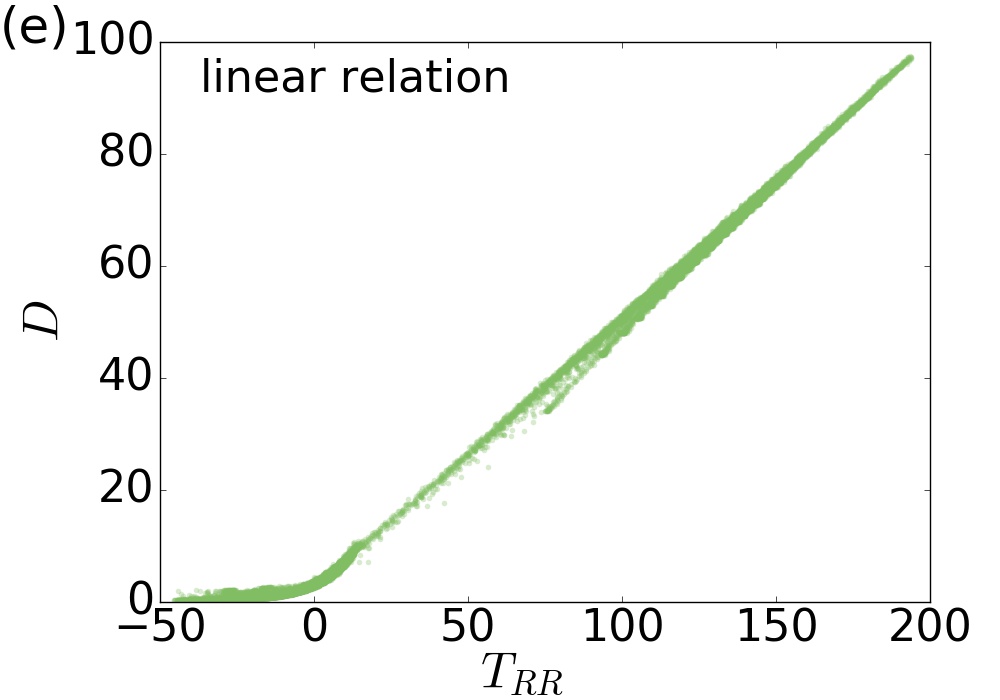}\label{fig:anderies-scatter}} &  
			\subfloat{\includegraphics[width = \elementwidth]{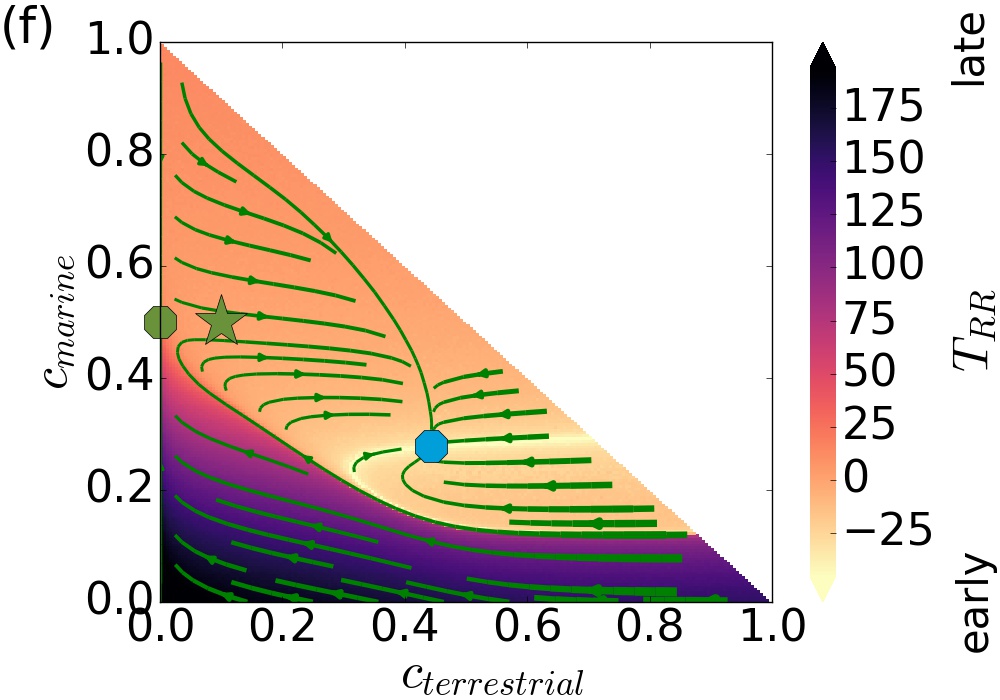}\label{fig:anderies-rrt}}  \\
			\subfloat{\includegraphics[width = \elementwidth]{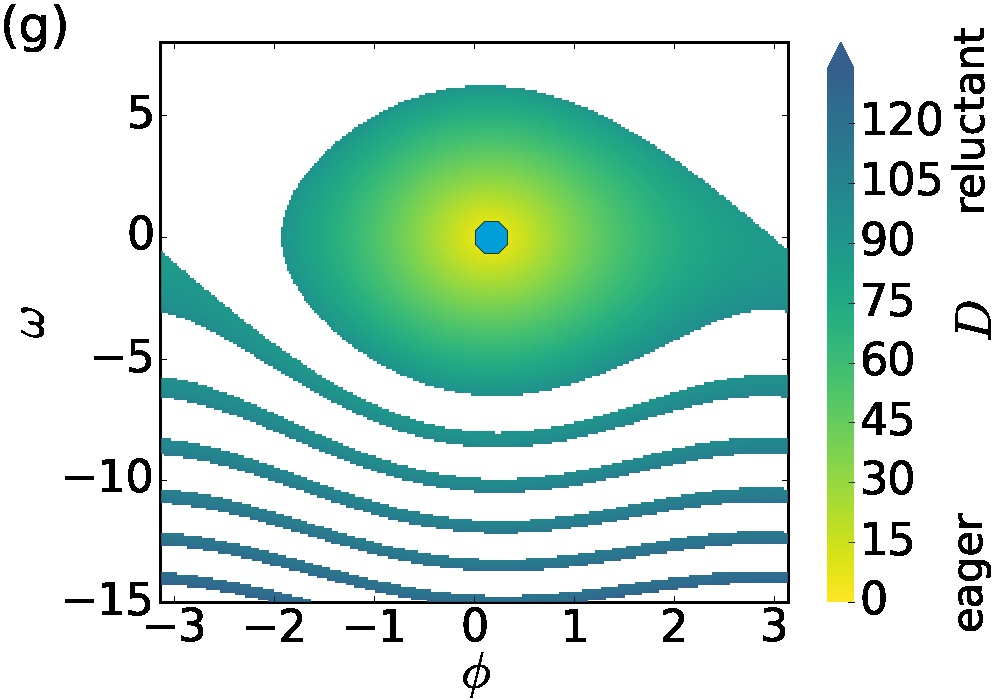}\label{fig:swing-audic}} &
			\subfloat{\includegraphics[width = \elementwidthtwo]{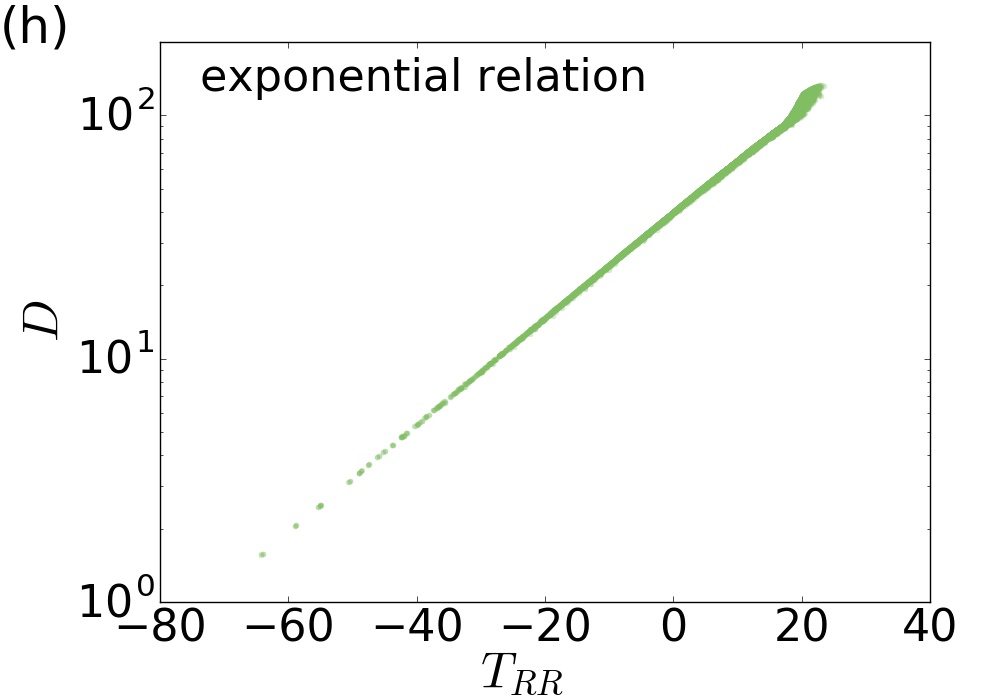}\label{fig:swing-scatter}} &
			\subfloat{\includegraphics[width = \elementwidth]{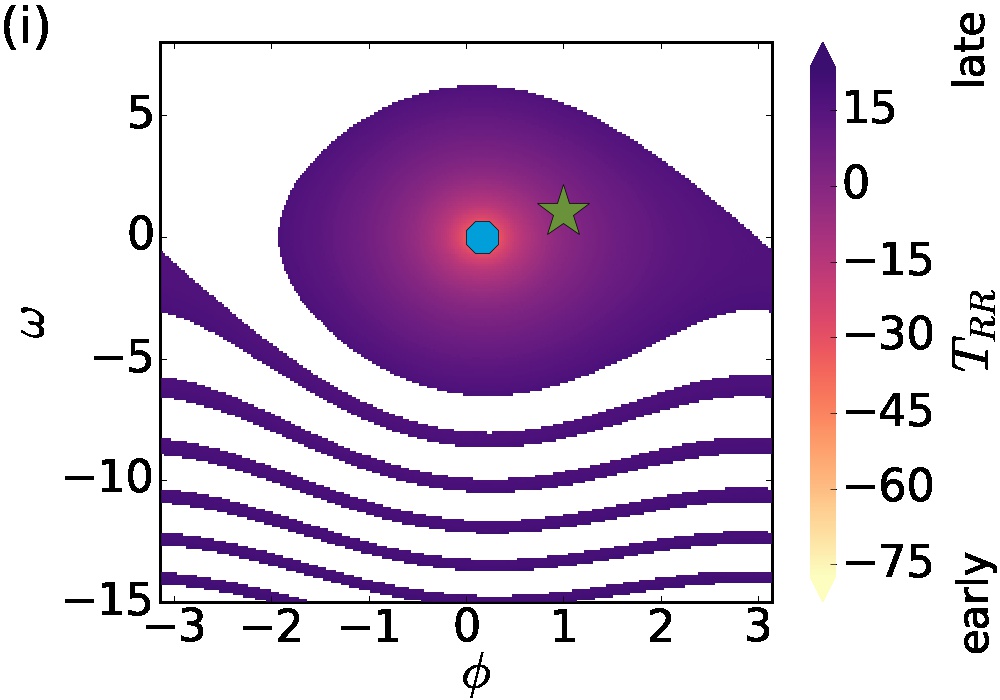}\label{fig:swing-rrt}} \\
			\subfloat{\includegraphics[width = \elementwidth]{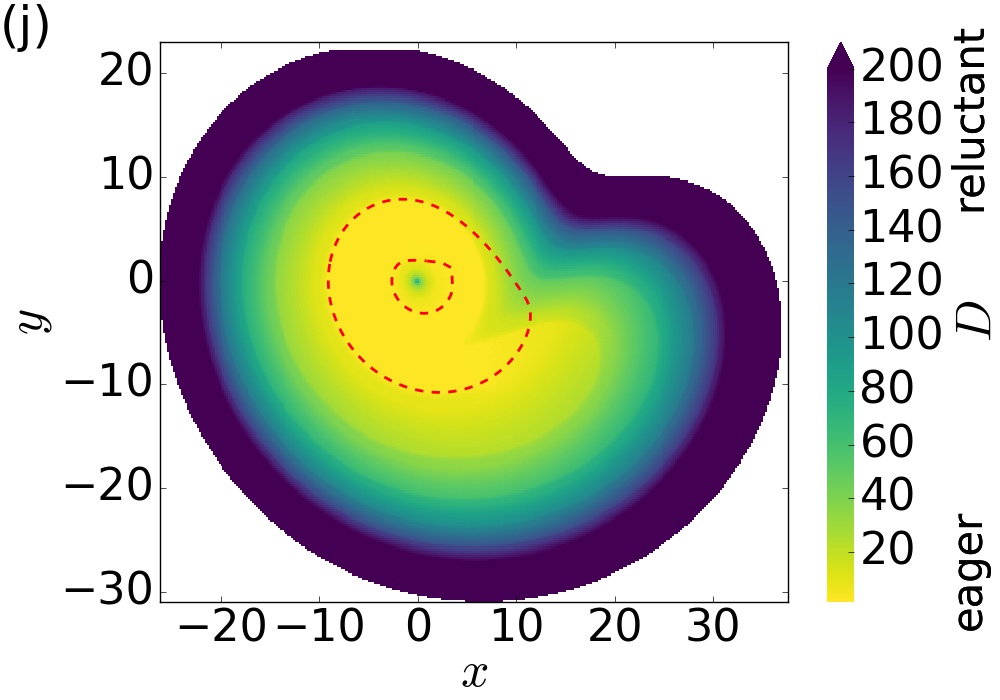}\label{fig:roessler-audic}} & 
			\subfloat{\includegraphics[width = \elementwidthtwo]{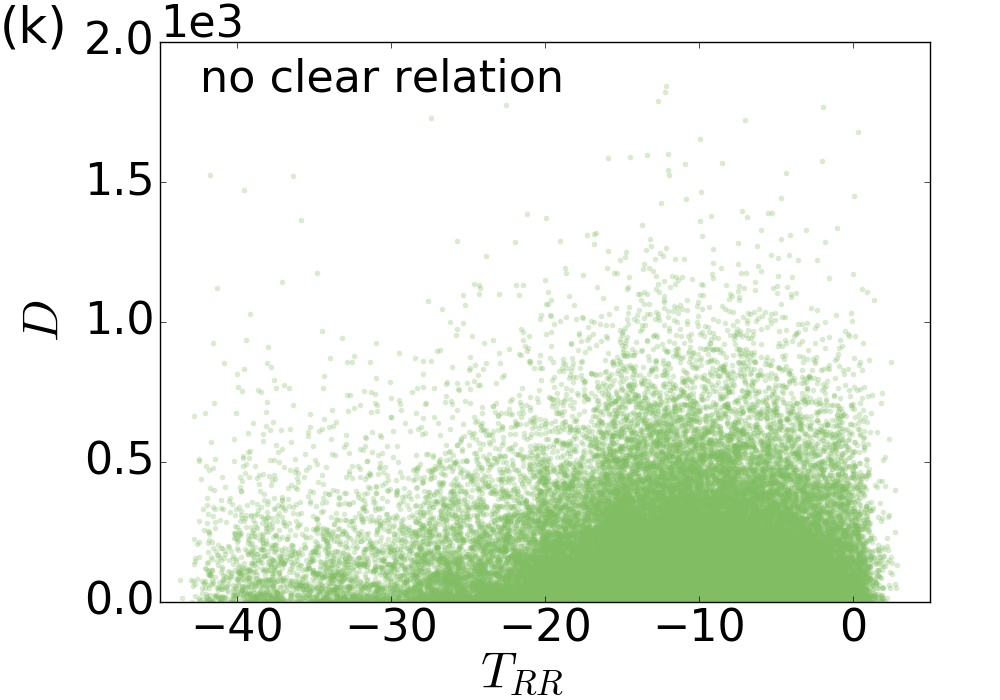}\label{fig:roessler-scatter}} & 
			\subfloat{\includegraphics[width = \elementwidth]{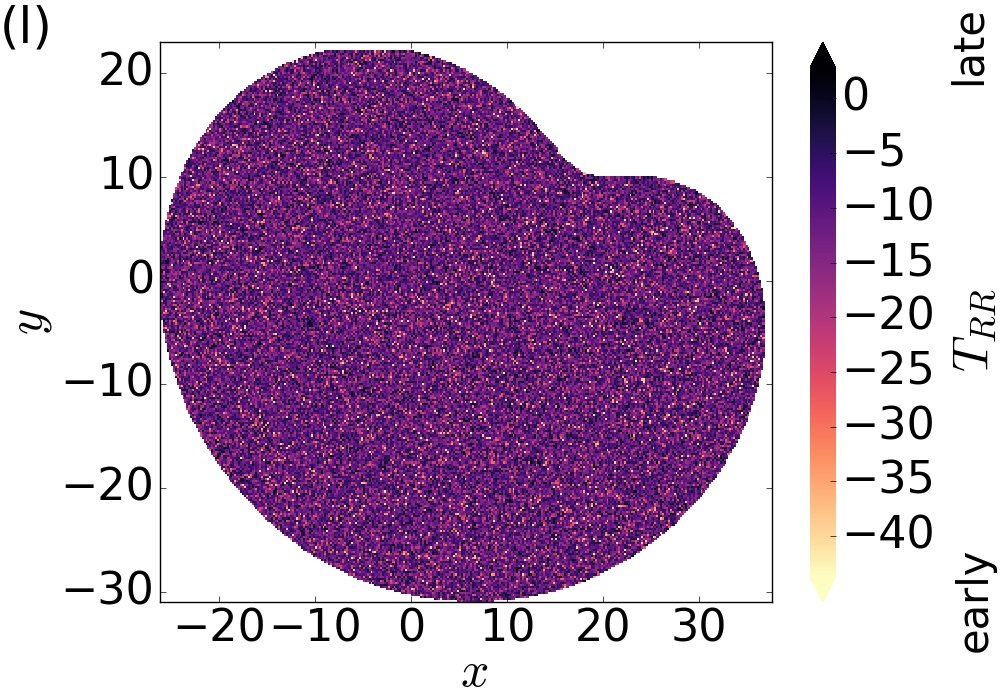}\label{fig:roessler-rrt}} 
		\end{tabular}
		\caption{(color online) For the presented example systems (top to bottom: linear system, global carbon cycle, generator in a power grid, Rössler system) the two new metrics have been computed for each initial condition in the state space and marked with color, see left column \emph{Area under Distance Curve} (\audic{}, $\audic{}$) and right column \emph{Regularized Reaching Time} (\rrt{}). The middle column shows their relations for the particular system. The initial conditions $x^a_0$ (triangle) and $x^b_0$ (square) from \Cref{fig:linear-example} have been marked in (a-c), too. Interestingly the manifold (red dashed line) where \audic{} increases the strongest is tilted with respect to the center manifold. As the Rössler system is 3-dimensional, the above plot depicts only a slice at fixed $z=0.6$. Furthermore, the dashed, red line marks the boundary of the attractor's projection to this plane.}
	\end{figure*}
	\egroup
	
\subsection{Linear system with two different time scales}
\label{sec:linear-system}

	The first example is the linear system in \Cref{eq:linear-system}. We chose dimension 2 in order to show the basic features while still being able to map the phase space. But the results can be applied in any dimension.
	\begin{align}
		\dot{x} = A\cdot x	\quad \text{with } A = \begin{pmatrix*}[r]
			-1 & 0 \\
			4 & -2
			\end{pmatrix*}\label{eq:linear-system}
	\end{align}
	
	While in general cases, \rrt{} and \audic{} can only be tackled numerically, we can solve the linear system analytically. The full details of this calculation are in \ref{app1-sec:linear-calculation} and the main results are (with $x_{ref} = (1, 1)^T$)
	\begin{align}
		\rrt{}(x) =& \ln\left(|x_1|\right) \label{eq:linear-rrt-result} \\
		D(x) =& \frac{1}{2}\left[ \frac{11}{3}x_1^2 + \frac{1}{2}x_2^2 + \frac{4}{3}x_1 x_2\right]. \label{eq:linear-audic-result}
	\end{align}
	While the result for \audic{} seems intuitive, \rrt{} might be more surprising. The result depends only on the $x_1$-axis. This is because the strong stable manifold, i.e.\ the one corresponding to the smaller Lyapunov exponent, is in the $x_2$-direction. But for $\epsilon \longrightarrow 0$ in \Cref{eq:def-rrt}, which implies $t \longrightarrow \infty$ for a trajectory, the contribution from the smaller Lyapunov exponent disappears. Thus, only the orthogonal part $x_1$ is relevant.
	
	In order to get a better feeling for these novels metrics, we have chosen two exemplary initial conditions, an \emph{early-eager} one and a \emph{late-eager} one, and plotted their trajectories' distance to the attractor over time in \Cref{fig:linear-example}. Thus, the blue-shaded area corresponds to the \audic{} value which is the same in both cases of our particular choice. In order to demonstrate the intuition that \rrt{} can be interpreted as the time-shift between the original trajectory and the reference trajectory until the asymptotics matched, we show both trajectories shifted to each other.
	
	The results from \Cref{eq:linear-rrt-result,eq:linear-audic-result} can also be seen in the numerical simulations in \Cref{fig:linear-nsp-rrt,fig:linear-nsp-audic}. The coloring describes the values of the metrics (cmp. the colorbar in the right of the figures) and green star represents the reference point for the \rrt{} computation. Note how the manifold where \audic{} increases the slowest (red dashed line in \Cref{fig:linear-nsp-audic}) is tilted with respect to the center manifold. This really proves that we are looking at transient phenomena and we have to take more into account than only the slowest dynamics.
	
	The exponential lower bound that comes up in the correlation diagram \Cref{fig:linear-nsp-scatter} can be calculated analytically (see \ref{app1-sec:linear-calculation})
	\begin{align}
		D(x) \geq  \frac{25}{18}e^{2\rrt{}}. \label{eq:linear-exponential-lower-bound}
	\end{align}
	
\subsection{Global carbon cycle}

	\begin{figure}
		\centering
		\includegraphics[width=\columnwidth]{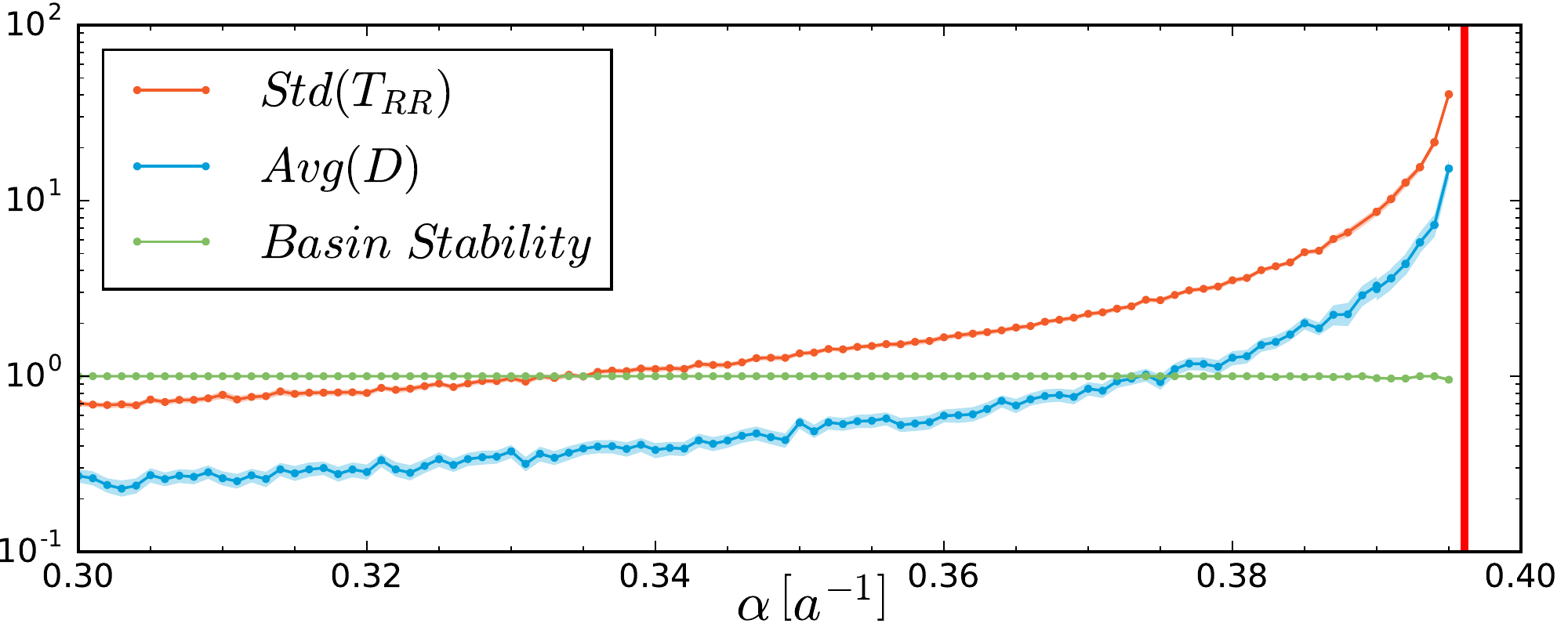}
		\caption{(color online) For the \emph{global carbon cycle} in \Cref{eq:gcc-ct,eq:gcc-cm}, the mean of \audic{} ($\audic{}$) and standard deviation of \rrt{} are plotted (with their 5\% and 95\% bootstrap errors) and show a divergence before the parameter $\alpha$ (yearly human carbon offtake) reaches the bifurcation value (marked by the red line). For comparison Basin Stability has been added which does \emph{not} show any change because the size of the basin stays constant before the bifurcation.}
		\label{fig:divAnderies}
	\end{figure}
	
	The second example is a conceptual model of the global carbon cycle proposed by Anderies~et~al.~\cite{anderies2013topology}, where we used the pre-industrialization version. It consists of three dynamical variables, the terrestrial, maritime and atmospheric carbon stocks denoted  $c_t$, $c_m$ and $c_a$ respectively, and the constraint $C = c_t + c_m + c_a = \ const$. Thus, we can reduce the system to 2 state variables $c_t$ and $c_m$ and rescale the units such that $C = 1$ arriving at
	\begin{subequations}
		\begin{align}
		\dot{c}_t =& NEP(p, r, c_t) - \alpha c_t \label{eq:gcc-ct}\\
		\dot{c}_m =& I(c_a, c_m), \label{eq:gcc-cm}
		\end{align}
	\end{subequations}
	where $NEP$ is the net eco-system production, $p$ photosynthesis, $r$ respiration, $\alpha$ harvesting parameter and $I$ diffusion; indirect dependencies have been omitted and more details are in \cite{anderies2013topology,heck2016esd}.
	
	The whole phase space of \Cref{eq:gcc-ct,eq:gcc-cm} as depicted in \Cref{fig:anderies-rrt} is the basin of the attraction of a fixed point in the middle marked by a blue dot; the dynamics is drawn as streams. Note that the trajectories starting in the lower part have to pass by a ``desert-like'' saddle (with $c_t = 0$) at the left (green dot). 
	
	The color in this graph depicts \rrt{} and the first finding is the splitting of the basin of attraction. Furthermore, the strong stable manifold becomes visible as a light beige line due to their low values of \rrt{}, i.e.\ as very \emph{early} states because \rrt{} diverges to $-\infty$. This proves it being the separatrix for the observed splitting and it will merge to an arm of the stable manifold corresponding to the saddle arising after the subcritical pitchfork bifurcation mentioned below. Also, the expected smooth increase of the return times when distancing (along the trajectories) from the attractor can be observed.
	
	When applying \audic{} to this model (\Cref{fig:anderies-audic}) the splitting of the basin can be observed again. In contrast to \rrt{}, the stable manifold is not visible because \audic{} can be seen as a (by distance) weighted time and the contributions for the asymptotic part where the difference in the Lyapunov spectrum matters are negligible. Furthermore, we see a clear linear correlation of both metrics in \Cref{fig:anderies-scatter} because all trajectories starting in the lower part have to pass by at the saddle on the left and spend a long time there.
	
	This example shows how saddles can induce long transients, as stressed by Hastings \cite{hastings2004transients}, and that our metrics react appropriately.
	
	It is important to note that the metrics are \emph{early-warning signals}, too. When increasing $\alpha$, corresponding to the harvest of terrestrial carbon, the system passes through a subcritical pitchfork bifurcation where the saddle becomes stable and the lower-left part of the phase space splits off. The divergences of the two metrics' statistics as seen in \Cref{fig:divAnderies} prove their prebifurcational sensitivity, while other systemic indicators like basin stability \cite{menck2013basin} do not change (up to numerical fluctuations, see \Cref{fig:divAnderies}).
	
\subsection{Generator in a power grid}
	\label{sec:swing}
	
	As an example of intermediate complexity, we chose the swing equations \cite{menck2014dead,yuan2003solution} in \Cref{eq:swing1,eq:swing2}, a model describing the dynamics of a single generator connected to a large power grid. It consists of two dynamical variables, the phase $\theta$ and angular frequency $\omega$, both in a reference frame rotating at the grid's rated frequency. The parameters of the system correspond to the net power production $P=1$ (at the node), the capacity of the transmission line $K = 6$ and dampening $\alpha = 0.1$.
	
	The stable fixed point at  $\omega_0 = 0$, $\phi_0 = \arcsin \frac{P}{K}$ describes a state of synchronization. For the chosen set of parameters, the system exhibits another attractor: a limit cycle at larger positive values of $\omega$. For negative values, the two basins of attraction are interleaved. A more detailed introduction and analysis can be found in \cite{menck2014dead,schultz2014detours}.
	\begin{subequations}
		\begin{align}
			\dot{\phi} &= \omega \label{eq:swing1} \\
			\dot{\omega} &= 2P - \alpha \omega - 2K \sin \phi \label{eq:swing2}
		\end{align}
	\end{subequations}
	
	Calculating \rrt{} inside the basin of the stable fixed pointed $(\omega_0, \theta_0)$ yields \Cref{fig:swing-rrt}. There is basically no color change away from the attractor, so we can see that a trajectory will barely spend any time in the transient and goes quickly to the attractor. Analogously, \Cref{fig:swing-audic} for \audic{} leads to a similar conclusion. 
	
	Comparing both metrics, see \Cref{fig:swing-scatter}, shows that they are closely linked. Note that $D$ is presented on a logarithmic scale, so the relation is exponential and can be explained using the calculations for a linear focus in \ref{app1-sec:linear-calculation}. So what we see here is actually the influence of the linearized part of the system.
	
	Note that roughly 30\% of the lower part of the phase space, where the nonlinearities actually have an influence, follow the exponential relation, too, because the transient is very fast and thus its influence is rather low.
	
	The aforementioned limit cycle corresponds to the system being far away from synchrony and generators would usually switch off before reaching it. As it is not so relevant, the treatment of this attractor is in \ref{app1-sec:swing-limit-cycle}.

\subsection{Chaotic Rössler oscillator}
	
	Although we have proven the convergence for fixed points only, we show with the chaotic Rössler system \cite{rossler1976equation,zgliczynski1997computer} that our metrics are applicable to higher-dimensional and more complex attractors also
	\begin{subequations}
	\begin{align}
		\dot{x} = & -y -z \label{eq:roessler1}\\
		\dot{y} = &\ x + ay \label{eq:roessler2} \\
		\dot{z} = &\ b + z\left( x- c \right).  \label{eq:roessler3}
	\end{align}
	\end{subequations}
	
	\Cref{fig:roessler-rrt} shows a slice of the phase space with the standard parameters $a = 0.2$, $b = 0.2$, $c = 5.7$ for \rrt{} and the expected sensitivity to initial conditions for chaos is observed: \emph{early} and \emph{late} trajectories lie closely together and the metric \rrt{} has low spatial correlation.
	
	In contrast, \audic{} shows in \Cref{fig:roessler-audic} surprisingly smooth changes of an embryo-like shape. Because the focus of this article is on transient dynamics a new feature of the chaotic Rössler system is uncovered: while the attractor is chaotic, the basin of attraction is very regular. \audic{} focuses on the initial transient and the chaotic asymptotics is filtered out. For comparison, the boundaries of the attractor's projection have been added with dashed red lines in \Cref{fig:roessler-audic} and depictions of the attractor are in \ref{app1-sec:roessler-attractor}.
	
	We can deduce that even though the system is chaotic the strong sensitivity to initial conditions happens rather late in the transient when the trajectory is already close to the attractor, because \rrt{} focuses more on the intermediate transient.
	
	However, this implies that \rrt{} can be successfully applied as an early-warning signal in this case, too. In order to demonstrate this, we chose to vary $a$ as it has a crucial influence on the system's dynamics (see the bifurcation diagram in \Cref{fig:roessler-bifurc} (green)). While for values of $a < 0.006$ (cf. \cite{barrio2011qualitative}) there is only a single stable fixed point, at $a \approx 0.006$ a limit cycle emerges due to a Hopf bifurcation \cite{barrio2011qualitative}. For $a > 0.11$, several period doublings are observed, finally leading to chaos for $a > 0.155$. Even in the chaotic regime, further bifurcations can be observed.
	
	\setlength{\elementwidth}{\columnwidth}
	\begin{figure}
		\centering
		\includegraphics[width = \elementwidth]{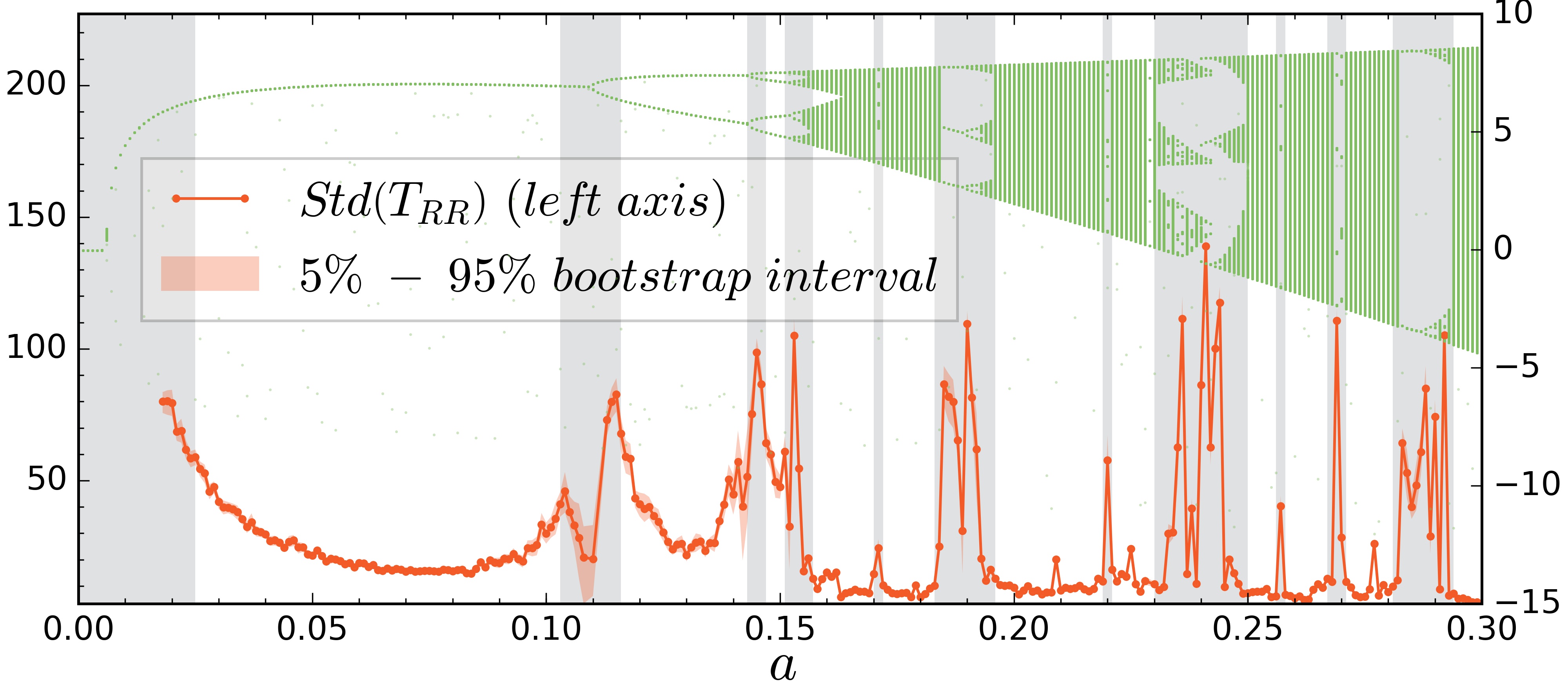}
		\caption{(color online) The bifurcation diagram (green) of the Rössler system for varying the parameter $a$ in \Cref{eq:roessler1,eq:roessler2,eq:roessler3} was computed from the local maxima in $z$ of the attractor and \rrt{} (orange) shows a strong sensitivity to these qualitative changes. The gray background is used so the reader can more easily connect the peaks in \rrt{} to the corresponding parts in the bifurcation diagram.}
		\label{fig:roessler-bifurc}
	\end{figure}
	
	In \Cref{fig:roessler-bifurc}, the standard deviation of the \rrt{} distribution from randomly chosen initial conditions inside the basin of attraction is given. Due to the sensitive dependence on initial conditions, the reference value varies a lot and hence introduce shifts in the distribution that do not describe actual changes in the system's dynamics. To remove this effect, it is crucial to use central statistics like the standard deviation.
	
	\rrt{} is strongly sensitive to any qualitative changes in the dynamics of the system, incl.\ even chaos-chaos transitions. Closely observing \Cref{fig:roessler-bifurc} uncovers that there is a base-line with a little noise at $\rrt \approx 10$ complemented with strong peaks. In the chaotic regime, the peaks correspond directly to qualitative changes. Also, we observe sensible changes during the period-doubling phase and a strong increase before the Hopf bifurcation at $a \approx 0.006$, proving the usefulness as an \emph{early-warning signal}.
	
	The abrupt downward peak at $a \approx 0.11$ is unexpected and more details are needed to clarify it.
	
	Note that sometimes the result of estimating the limit numerically fluctuated slightly, suggesting that an improved definition of \rrt{} could be sensible. The defining property could be the constant negative orbital derivative $\left.\frac{d}{d\tau}\rrt(x(\tau))\right|_{\tau=0} = -1$ with $\rrt (y) = 0\quad \forall y \in \mathcal{F}^{ss}\left(x_{ref}\right)$, where $\mathcal{F}^{ss}\left(x_{ref}\right)$ is the leaf of the strong stable foliation containing the reference point $x_{ref}$. This turns out to be highly non-trivial and is an issue for future research.
	
\section{Discussion and outlook}
\label{sec:discussion}

%
%

%
	
	We have treated problems (I)-(IV) arising from common methods of analyzing the transient time needed to reach an attractor in nonlinear systems by introducing two complementing metrics: \emph{Area under Distance Curve} and \emph{Regularized Reaching Time}, and applied their properties, in particular them being Lyapunov functions, to the development of an efficient estimation algorithm. Furthermore, in \ref{app2-sec:definitions-and-theorems} we prove under mild conditions the existence of these functions.
	
	In order to show their applicability and usefulness, we chose 4 example systems and analyzed them with respect to these novel metrics.
	
	In the linear system everything could be solved analytically and we found the tilt of the slowest increasing \audic{} axis, demonstrating that even this simple system already can have a rich transient. Furthermore, we explained why the strong-stable manifold turns out to be a singularity of the system's \rrt{} function but could argue that this is irrelevant for the estimation of statistics as the integral is still finite.
	
	The global carbon cycle demonstrated the importance of the transient analysis, as the desert state is only a saddle but nevertheless passing by there would lead to an extinction of humanity. The artificially estimated splitting of the basin in \cite{anderies2013topology} arises naturally here and \rrt{} uncovers that the separatrix is the strong-stable manifold of the system. Furthermore, the saddle induces an unexpected linear relation between \audic{} and \rrt{}. Particularly interesting is how the (central) statistics of our metrics are a systemic approach to the concept of \emph{critical slowing down} \cite{scheffer2009early,scheffer2012anticipating,lenton2011early} and react strongly to prebifurcational changes. Hence they are \emph{early-warning signals} for fundamental changes of the system. 
	
	The generator in a power grid displays how even a very nonlinear system can have rather unimportant transients while the asymptotics is mostly relevant. Thus our measures show the exponential relation expected from a linear focus.
	
	In order to prove the applicability to more complex dynamics, we used our metrics on the Rössler system, too, and found the smoothness of the attractor's basin with \audic{}. As the attractor itself is rather chaotic, this smoothness is surprising. Even though \rrt{} reacts strongly to the sensitivity to initial conditions of the chaotic system its worth is displayed when varying the $a$ parameter. This parameter has strong influence on the Rössler system's dynamics and \rrt{} reacts strongly to the different bifurcations and even the chaos-chaos transitions, proving again its worth as \emph{early-warning-signal}.
	
	Furthermore, using \audic{} as a cost or damage functions can be applied in the context of earth system analysis and climate impacts. In the future, applying this metric for estimating viable pathways without transgressing e.g. the Planetary Boundaries \cite{rockstrom2009planetary,steffen2015planetary} is one goal of this work.
	
	We did not perform any comparative analysis with the mentioned ``$\epsilon \longrightarrow 0$''-approaches because these behave inconsistently and their quantitative results are arbitrary, as discussed in length in \Cref{sec:problems-reaching-time} and \ref{app1-sec:divergence-problem}.
	
	Four directions of immediate future research are due:
	
	(1) Improving the definition of \rrt{} further using the Lyapunov function properties as described above. In order to generalize these definitions to even more complex systems this step seems crucial.
	
	(2) Applying the metrics to more complex systems to understand them and their properties to topological structures, e.g. in complex networks \cite{havlin2012challenges}, in more detail.
	
	(3) Introducing more sophisticated methods of Lyapunov function estimations \cite{giesl2015review}. The curse of dimensionality is going to be a problem for network systems, hence methods for estimation these metrics statistics in such kinds of systems will need novel algorithms.
	
	(4) Comparison of the timing of transients in model output and observation data. With the observable time, new possibilities for comparison with observation data are available and should be used.

\ack

This paper was developed within the scope of the IRTG 1740/TRP 2011/50151-0, funded by the DFG/FAPESP.
This work was conducted in the framework of PIK’s flagship project on coevolutionary pathways (\textsc{copan}).
The authors thank CoNDyNet (FKZ 03SF0472A) for their cooperation.
The authors gratefully acknowledge the European Regional Development Fund (ERDF), the German Federal Ministry of Education and Research and the Land Brandenburg for supporting this project by providing resources on the high performance computer system at the Potsdam Institute for Climate Impact Research.
The authors thank the developers of the used software: Python\cite{python}, Numerical Python\cite{numpy} and Scientific Python\cite{scipy}.

The authors thank
Sabine Auer,
Karsten Bölts,
Catrin Ciemer,
Jonathan Donges,
Jasper Franke,
Frank Hellmann,
Jakob Kolb,
Chiranjit Mitra,
Finn Müller-Hansen,
Jan Nitzbon,
Reik Donner,
Stefan Ruschel,
Tiago Pereira da Silva,
Francisco A. Rodrigues,
Paul Schultz,
and Lyubov Tupikina
for helpful discussions and comments.

\appendix

\section{Details on the problems of reaching time definitions}
\label{app1-sec:divergence-problem}

\begin{figure*}	
	\includegraphics[width = \textwidth]{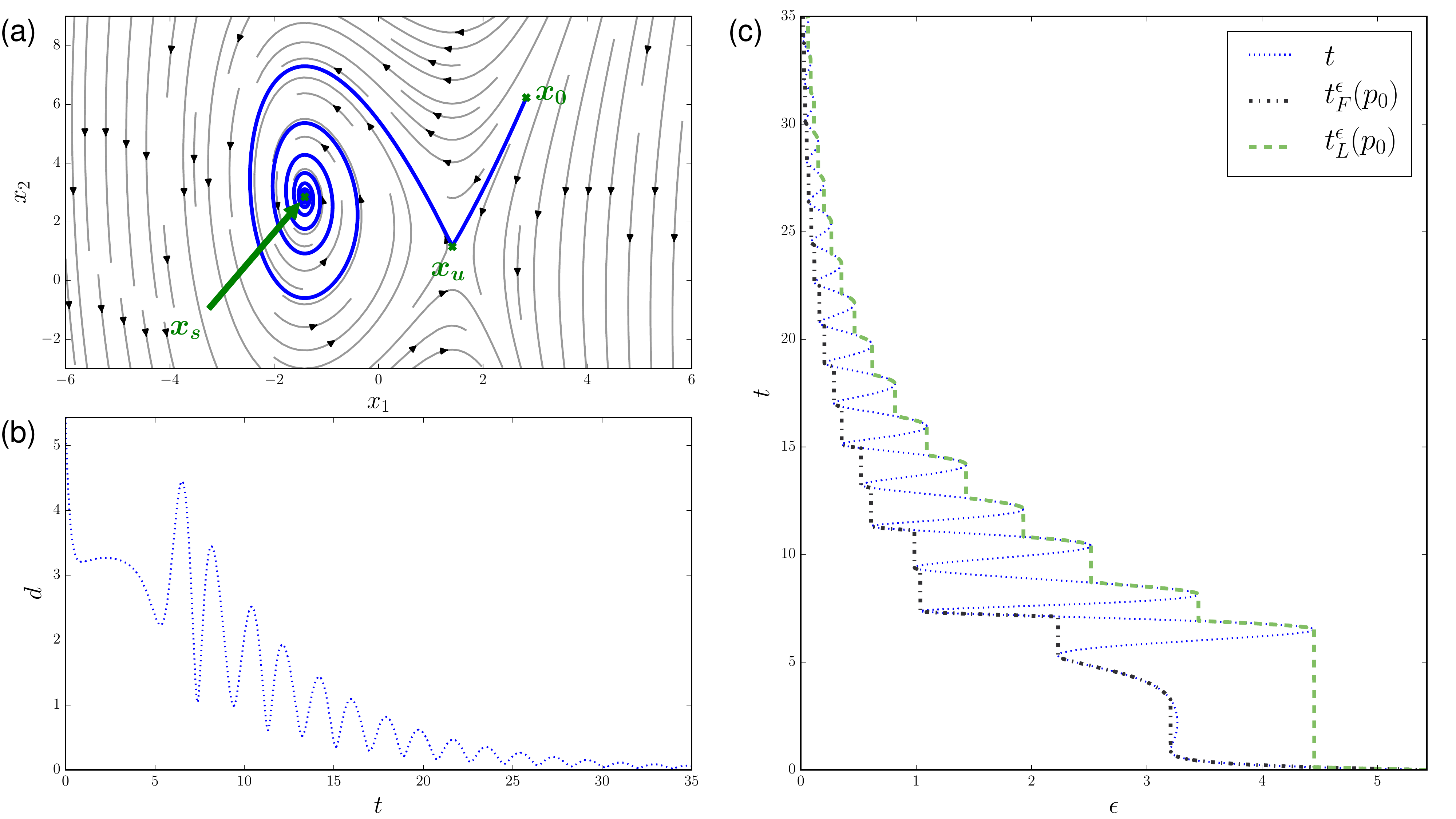}
		\subfloat{\label{app1-fig:ex-phase-space}}
		\subfloat{\label{app1-fig:ex-dist}}
		\subfloat{\label{app1-fig:eps-t}}
		\vspace{-4mm} 
	\caption{(color online) The phase space for the example system \Cref{app1-eq:example-odes1,app1-eq:example-odes2} using the parameters $a = 2$ and $b = 0.3$ is depicted in (a). Furthermore, the stable spiraling node $x_s$ and the saddle $x_u$ are added. The trajectory (blue) starting at $x_0$ closely passes by $x_u$ before it finally circulates in to $x_s$.
	(b) shows the (Euclidean) distance $d$ (dotted, blue) of this trajectory to its attractor $x_s$ over time $t$. The first longer dip between $t = 1$ and $t = 5$ is the transient at the (unstable) saddle $x_u$ while the oscillations afterwards are the spiraling around $x_s$.
	(c) turns (b) around in order to show the dependence of the time $t$ on some distance $d = \epsilon$ (dotted, blue) of the trajectory to the attractor $x_s$. Secondly, there are multiple values of $\epsilon$ for each $t$ so observables like $t^\epsilon_1(x_0)$ and $t^\epsilon_l(x_0)$ need to be introduced. $t^\epsilon_1(x_0)$ (dash-dotted, black) marks when the first time the $\epsilon$-neighborhood around $x_s$ is entered and $t^\epsilon_l(x_0)$ (dashed, green) the last time. The implications, particularly the arising problems for time definitions, are described in text.
}
\end{figure*}
To understand the problems of known \emph{reaching time definitions} in more detail, we introduce a small example system (\Cref{app1-eq:example-odes1,app1-eq:example-odes2}) containing a saddle-node-bifurcation at $a = 0$ followed by a node-focus transition at $a = \frac{b^4}{64}$. Hence, for $a > \frac{b^4}{64}$, a flow like the one depicted in \Cref{app1-fig:ex-phase-space} with a stable fixed point at $x_s = (x_{1,s}, x_{2,s}) = \left(-\sqrt{a}, 2( 1 + b \sqrt{a} )\right)$ and a saddle at $x_u = (x_{1,u}, x_{2,u}) = \left(\sqrt{a}, 2( 1 - b \sqrt{a} )\right)$ is obtained. Note that this system was mainly chosen in order to present all occurring problems with just a single example. 
\begin{subequations}
	\begin{align}
	\dot{x_1} =& 1 - \frac{x_2}{2} - bx_1 \label{app1-eq:example-odes1} \\
	\dot{x_2} =& 2\left(a - x_1^2\right) \label{app1-eq:example-odes2}
	\end{align}
\end{subequations}

Problem (I), the divergences, can be observed when plotting the distance $d$ of the example trajectory starting at some initial condition $x_0$ in \Cref{app1-fig:ex-phase-space} to the stable fixed point $x_s$ over time to get \Cref{app1-fig:ex-dist}. While $d$ approaches $0$ because $x_s$ is an attractor, it will reach the attractor in infinite time only, i.e. the \emph{reaching time} diverges. 

These problems are often addressed by measuring the time until the trajectory is ``close'' to the attractor. The relevant metrics would be the times when the trajectory enters an $\epsilon$-neighborhood around $x_s$ the first time ($t^\epsilon_F(x_0)$) and the last time ($t^\epsilon_L(x_0)$), i.e.\ the neighborhood is not left anymore. To illustrate this, we turned \Cref{app1-fig:ex-dist} around and added these two metrics in \Cref{app1-fig:eps-t}.

While the values for each of these metrics are now finite, their strong dependence on the choice of $\epsilon$ makes the physical interpretation (problem (II)) hard. Particularly, for the relevant small values of $\epsilon$ the times can become become as large as one wants due to their divergence for $\epsilon \longrightarrow 0$. Furthermore, it is even hard to state when the trajectory is ``close'' or still in the ``transient''. From \Cref{app1-fig:eps-t}, the discontinuities of the metrics (problem (III)) are observed, too. They go hand in hand with the physical interpretation problem as small changes in the choice of $\epsilon$ can induce jumps in the values of the time.

Problem (IV), non-invariance, can be understood by the fact, that a distance function is not invariant under a change of variables. This means, we would get different values for $t^\epsilon_F(x_0)$ and $t^\epsilon_L(x_0)$ if we change the variables. This principle can also be understood from a different point of view: Which is the correct function that one should choose to measure the distance to the attractor? This might sound trivial, but it is crucial for the results. As only values close to the attractor matter, but the metrics are very sensible to changes (problem (II)), the impact of this choice may become large.

		
\section{Analytical solutions of a linear system}
\label{app1-sec:linear-calculation}

\subsection{Area under Distance Curve}

For a linear system, we chose dimension 2 in order to show the basic features while still being able to map the phase space. But the general procedure can be applied in any dimension.

The systems is
\begin{align}
\dot{x} = A\cdot x
\end{align}

where $x, \dot{x} \in \R^2$ and $A \in R^{2\times2}$. This implies a fixed point at $x_* = (0,0)^T$ (that we will abbreviate with $0$ from now on) and we want $A$ to be (complex) diagonalizable and with negative real parts of all eigenvalues. Hence $0$ is exponentially stable.

In this part we discuss the case of a real leading eigenvalue and the case with complex leading eigenvalues can be treated similarly while keeping Proposition \ref{app2-prop:RRTnorm} in mind. For reasons of simplicity we use as distance function $d(x, 0) = {x_1}^2 + {x_2}^2$. (Note that this is not a distance in mathematical terms but still captures how close on is to the attractor. This has been discussed in the main text and in \ref{app1-sec:conv-audic}.)

An ansatz for \audic{} is $D(x) = \frac{1}{2} x^T C x$ with a symmetric matrix $C\in \R^{2\times2}$. Plugging this in the orbital derivative equation for \audic{} leads to
\begin{align}
x^T C A x = -x^T x.
\end{align}
This implies that the asymmetric part of $CA$ equals negative unity. As this should be true for all $A$ (including non-symmetric ones) the negative inverse of $A$ is in general not the solution for $C$. Instead we note that for any asymmetric matrix $Q\in R^{2\times2}$, $d(x,0)$ can be written as
\begin{align}
d(x, 0) =& x^T x =x^T (\id + Q) x
\end{align}
because $x^T Q x = 0$.

Thus, we shifted the problem to finding an asymmetric matrix $Q$ s.t. $(\id + Q)A\inv$ is symmetric so we have
\begin{align}
C = -(\id + Q)A\inv. \label{app1-eq:C-from-Q-and-A}
\end{align}
To solve this, we introduce $d(d-1)$ parameters for the upper-right triangle of $Q$ (because the lower-left one is then defined by the constraint $Q_{ij} = - Q_{ji}$).

In the case of a 2-dimensional system, this is one parameter $q$:
\begin{align}
Q =& \begin{pmatrix*}[r]
0 & q \\
-q & 0
\end{pmatrix*} \label{app1-eq:Q-def}
\end{align}
Plugging \Cref{app1-eq:C-from-Q-and-A} in the symmetry constraint $C_{ij} = C_{ji}$ leads to
\begin{align}
-A_{01} + qA_{00} = -A_{10} -q A_{11}.
\end{align}
This can be solved for $q$
\begin{align}
q = \frac{A_{01} - A_{10}}{A_{00} + A_{11}} \label{app1-eq:result-q}
\end{align}
and by plugging $q$ back in \Cref{app1-eq:C-from-Q-and-A,app1-eq:Q-def}, a rather lengthy expression for $C$ can be obtained.

Choosing the matrix $A$ as in the example \Cref{eq:linear-system} yields the correct \audic{} function:
\begin{align}
A = \begin{pmatrix*}[r]
-1 & 0 \\
4 & -2
\end{pmatrix*} \quad \quad \quad D(x) = \frac{1}{2}x^T \begin{pmatrix*}[r]
\frac{11}{3} &  \frac{2}{3} \\
\frac{2}{3} & \frac{1}{2}
\end{pmatrix*} x. \label{app1-eq:linear-audic-result}
\end{align}

\subsection{Regularized Reaching Time}

Calculating \rrt{} in the linear case can be done, too. The first step is to chose an eigenvector basis $\lbrace v_i \rbrace_{i = 0, \dots d-1}$ corresponding to the eigenvalues ordered as $\Re\lambda_0 > \Re\lambda_1 >\dots$ of $A$ and find the (unique) decomposition of $x$ in that basis:
\begin{align}
x = \sum_i \alpha_i v_i. \label{app1-eq:decomposition-in-eigenvectors}
\end{align}
We can plug this in the solution of the linear system:
\begin{align}
x(t) = e^{At}x = \sum_i \alpha_i v_1 e^{\lambda_i t}.
\end{align}

Taking the limit as in \Cref{app1-eq:def-rrt} looks at large values of $t$ s.t. the terms for $i\geq1$ can be omitted, because the $\lambda_i$ have been ordered in an appropriate way:
\begin{align}
x(t) = e^{At}x_0 = \alpha_0 v_0 e^{\lambda_0 t} \quad \quad \text{for large }t.
\end{align}
For a fixed distance value $\epsilon = \left\| x(t) \right\|$ this equation can be solved for $t$ yielding the reaching time:
\begin{align}
t_{reach}(\epsilon, x_0) = \frac{1}{\lambda_0} \ln\left(\frac{\epsilon}{\| \alpha_0 v_0 \|}\right)
\end{align}
Plugging this in the definition for \rrt{} (using a reference point $x_{ref}$ with the decomposition coefficient $\alpha_{ref}$) gives
\begin{align}
\rrt{}(x) =& t_{reach}(\epsilon, x_0) - t_{reach}(\epsilon, x_{ref}) \\
=& \frac{1}{\lambda_0} \ln\left(\frac{|\alpha_{ref}|}{|\alpha_0|}\right). \label{app1-eq:linear-rrt-result}
\end{align}
Note that we did not use the \audic{} level sets as described in the main text. This could be done here but only lengthens the calculation while the result stays the same.

For the example in \Cref{eq:linear-system} the matrix 
\begin{align}
A = \begin{pmatrix*}[r]
-1 & 0 \\
4 & -2
\end{pmatrix*}
\end{align}
has been used, which has as eigenvalues and -vectors
\begin{align}
\lambda_0 = -1,\ 
v_0 =  \left(\begin{matrix}
1  \\
1 
\end{matrix}\right) 
\quad \text{and} \quad
\lambda_1 = -2,\ 
v_0 =  \left(\begin{matrix}
0  \\
1 
\end{matrix}\right) .
\end{align}
The decomposition of \Cref{app1-eq:decomposition-in-eigenvectors} is
\begin{align}
x = \left(\begin{matrix}
x_1  \\
x_2 
\end{matrix}\right) 
= x_1 v_0 + (x_2 - x_1) v_1
\end{align}
Thus $\alpha_0 = x_1$. Furthermore, we chose $x_{ref} = \left(\begin{matrix}
1  \\
1 
\end{matrix}\right)$ so $\alpha_{ref} = 1$. Plugging these results in \Cref{app1-eq:linear-rrt-result} yields the correct \emph{Regularized Reaching Time}
\begin{align}
\rrt{}(x) =& \ln\left(|x_1|\right). \label{app1-eq:linear-result-numerical}
\end{align}

In order to understand the exponential lower bound seen in \Cref{fig:linear-nsp-scatter}, we expand \Cref{app1-eq:linear-audic-result}.
\begin{align}
D(x) = \frac{1}{2}\left[ \frac{11}{3}x_1^2 + \frac{1}{2}x_2^2 + \frac{4}{3}x_1 x_2\right].
\end{align}

Because \rrt{} in \Cref{app1-eq:linear-result-numerical} depends only on $x_1$ we need to find the lower bound of $D$ for fixed $x_1$, i.e.~the minimum of the polynomial in $x_2$. We find $x_{2,min} =-\frac{4}{3}x_1$ and thus
\begin{align}
D(x) \geq  \frac{25}{18}x_1^2.
\end{align}
Taking the result from \Cref{app1-eq:linear-result-numerical} and plugging it in gives rise to the observed exponential bound
\begin{align}
D(x) \geq  \frac{25}{18}e^{2\rrt{}}. \label{app1-eq:D-geq-expTRR}
\end{align}

\section{Reaching time analysis of the limit cycle in the swing equation}
\label{app1-sec:swing-limit-cycle}

\bgroup
\setlength{\elementwidth}{0.28\textwidth}
\setlength{\elementwidthtwo}{\elementwidth}
\setlength{\tabcolsep}{0pt}
\setlength\vpixdist{-4.4mm}
\newcolumntype{C}[1]{>{\centering\let\newline\\\arraybackslash\hspace{0pt}\vspace{\vpixdist}}m{#1}}
\renewcommand*{\arraystretch}{0.2}
\begin{figure*}
	\centering
	\begin{tabular}{c c c}			
		\subfloat{\includegraphics[width = \elementwidth]{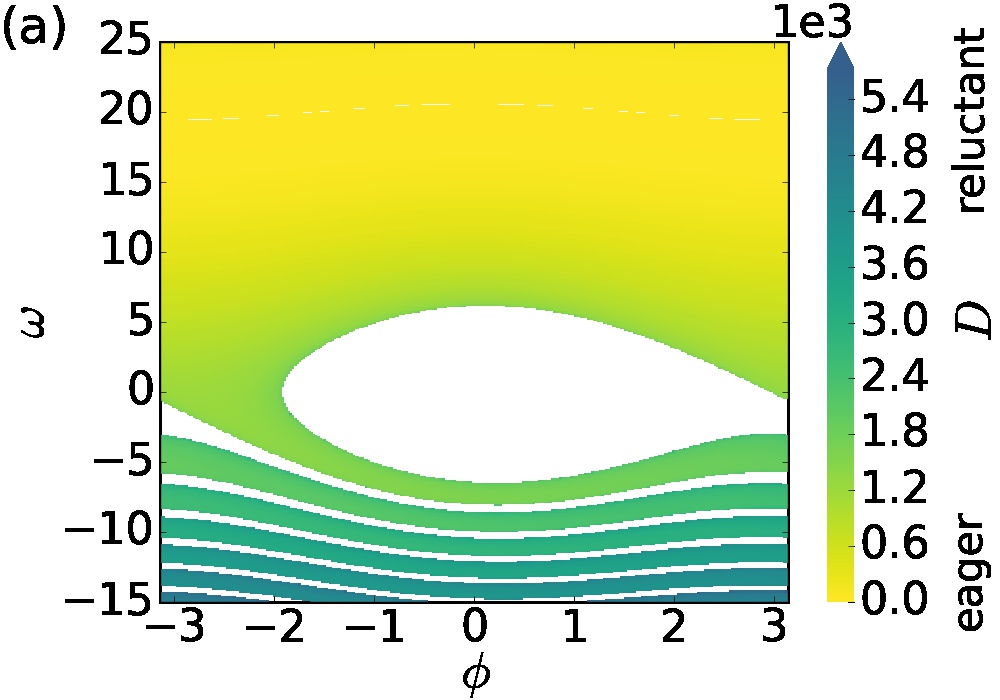}\label{app1-fig:swing-audic2}} &
		\subfloat{\includegraphics[width = \elementwidthtwo]{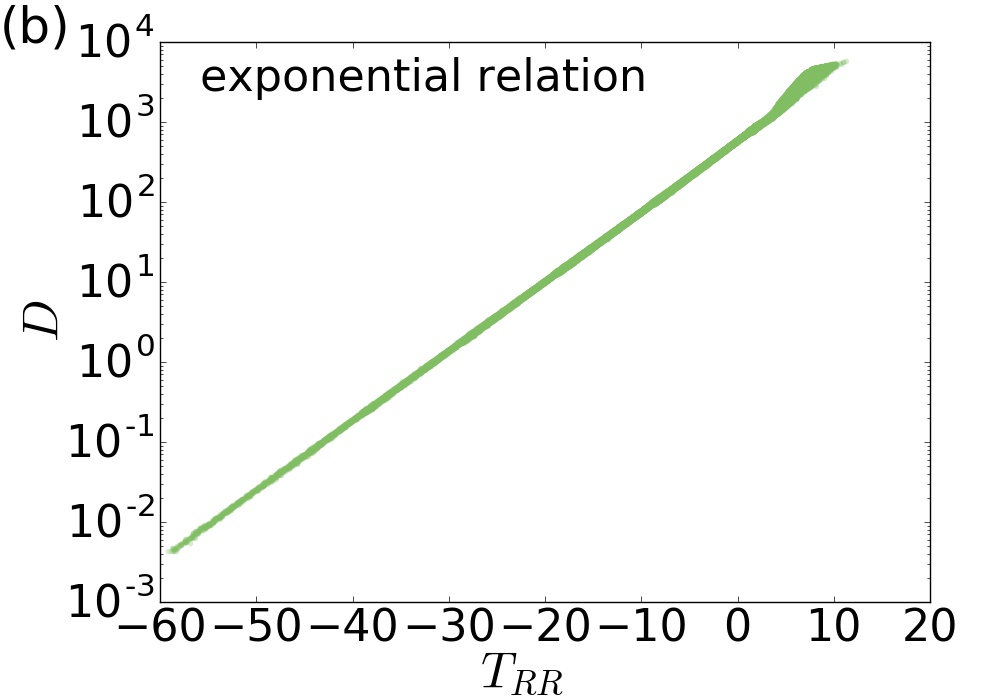}\label{app1-fig:swing-scatter2}}	&
		\subfloat{\includegraphics[width = \elementwidth]{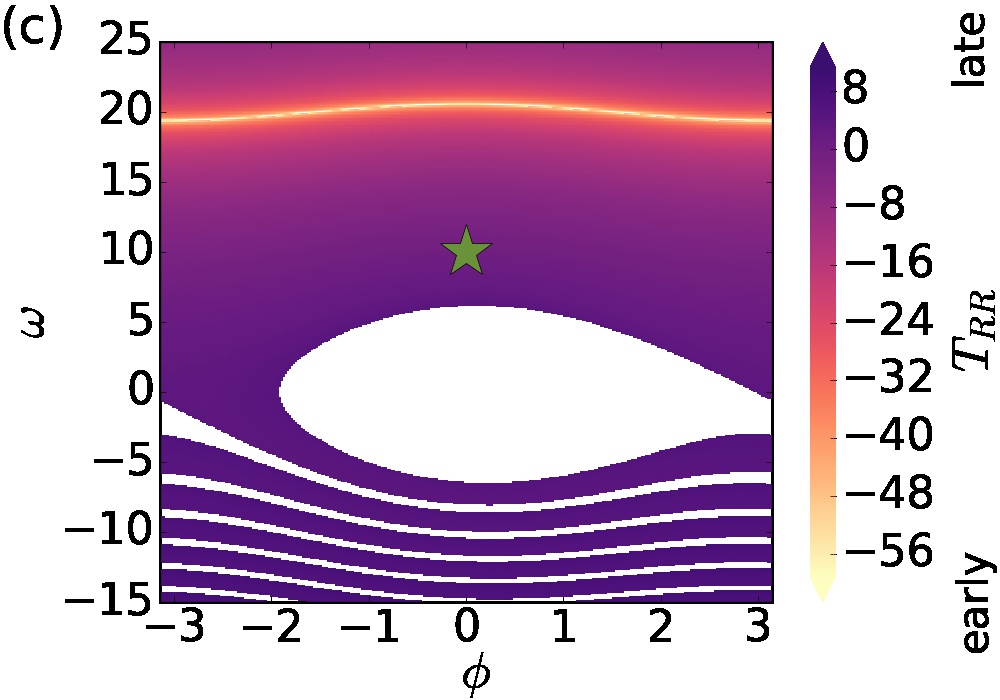}\label{app1-fig:swing-rrt2}}  		
	\end{tabular}
	\caption{(color online) The analysis of (a) \emph{Area under Distance Curve}, (c) \emph{Regularized Reaching Time} and (b) their relation for the limit cycle in the swing equation corresponding to the system ending up far away from synchrony.}
\end{figure*}
\egroup

As mentioned in \Cref{sec:swing}, the swing equations (\Cref{eq:swing1,eq:swing2}) have for the chosen set of parameters a limit cycle as the second attractor for values around $\omega \approx 20$ (see light beige line in \Cref{app1-fig:swing-rrt2}). This attractor corresponds to the generator being far from synchrony with the grid and usually it would have been switched off long before reaching it, so it is rather irrelevant and we analyze it here for completeness only.

Again, as with the fixed point attractor discussed in \Cref{sec:swing}, we observe in \Cref{app1-fig:swing-audic2,app1-fig:swing-rrt2} that the influence of the transient is rather low and the system approaches the attractor exponentially. Looking at Poincaré maps for fixed $\phi$ would give asymptotically exponential behavior. Thus the exponential relation in \Cref{app1-fig:swing-scatter2} can be explained with the calculations for the linear system in \ref{app1-sec:linear-calculation}.

\section{One-dimensional Systems}

\subsection{Closed Formulas}	
\label{app1-sec:1d-systems}

In one-dimensional systems
\begin{align}
\dot{x} = f(x)
\end{align}
with a stable fixed point at $x^*$, closed formulae for the \rrt{} and \audic{} can be written down. By separation of variables we get
\begin{align}
\int_{x_0}^{x^* \pm \epsilon}\frac{dx}{f(x)} = t(x_0, \epsilon),
\end{align}
where $x_0$ is initial state, $\epsilon$ the difference to the fixed point, $\pm$ needs to be chosen depending on the side of $x^*$ where the initial state $x_0$ is and $t(x_0, \epsilon)$ the time.

Then the \emph{Regularized Reaching Time} is
\begin{align}
\rrt{} =& \lim\limits_{\epsilon\rightarrow 0} t(x_0, \epsilon) - t(x_{ref}, \epsilon) \\
=& \int_{x_0}^{x_{ref}}\frac{dx}{f(x)}. \label{app1-eq:trr-1d}
\end{align}

In the same manner, the closed expression for \audic{} in one dimension can be derived:
\begin{align}
D(x) = \int_{x_0}^{x^*}dx\frac{d(x,x^*)}{f(x)} \label{app1-eq:audic-1d}
\end{align}
where $d(\cdot,\cdot)$ is the chosen distance function.

\subsection{Quadratic Correction}
\label{app1-sec:quadratic}

\begin{subequations}
	
	In order to demonstrate how the derived equations can be applied, we analyze this system which has a linear term plus a quadratic correction:
	\begin{align}
	&\dot{x} = f(x) = - x + b  x^2 \label{app1-eq:quadratic-system}
	\end{align}
	The attractor in this system is a fixed point at $x^* = 0$ and the corresponding basin of attraction is
	$\mathcal{B}(0) = \left( -\infty, 1/b \right)$.
	
	\rrt{} can be calculated in a straightforward manner using \Cref{app1-eq:trr-1d} and yields in \Cref{app1-eq:rrt-quadratic}. This result is depicted in \Cref{app1-fig:rrt-D-quadratic} (with different reference states $x_{ref}$ and compared with the linear result) and one can observe a different behavior one each side of $x^*$. Particularly relevant are the negative divergence at $x^* = 0$ and the positive one at $x_b =  1/b$. The latter fixed point is the boundary of the basin of attraction and hence never reaches $x^*$. So we expect \rrt{} in the limit to $x_b$ to diverge and this can be seen in \Cref{app1-fig:rrt-D-quadratic}.
	
	Furthermore, the curve intersects the x-axis at $x_{ref}$ as expected.
	\begin{align}
	T_{RR}(x) =& \log\left|\frac{x}{x_{ref}}\right| - \log\left|\frac{1-bx}{1-bx_{ref}}\right| \label{app1-eq:rrt-quadratic}
	\end{align}
\end{subequations}			
\begin{figure}
	\centering
	\includegraphics[width = 0.95\columnwidth]{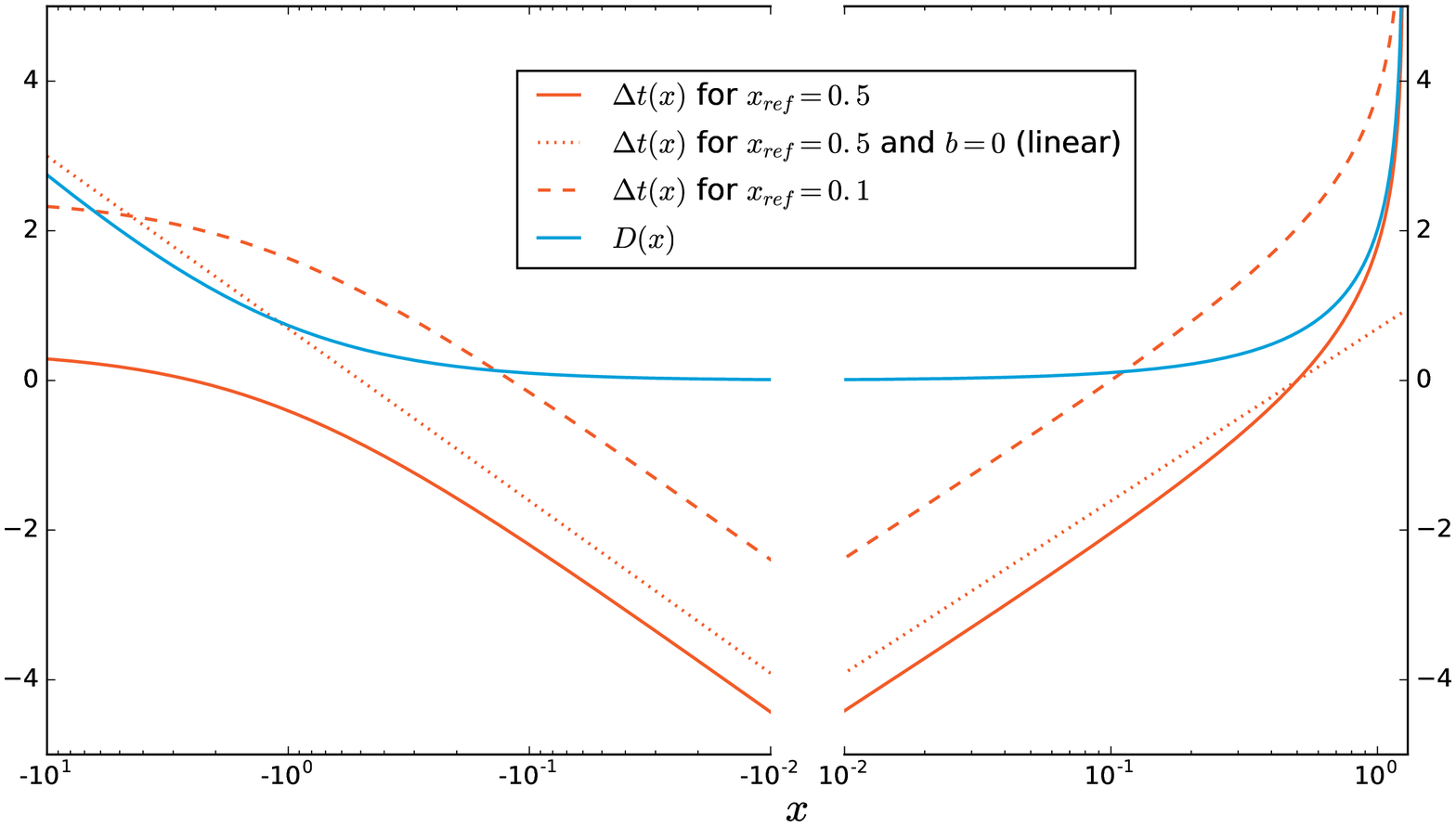}
	\caption{Analytical result for the \emph{Regularized Reaching Time} (\rrt{}) and \emph{Area under Distance Curve} (\audic{}, $D$) in a one-dimensional ODE with quadratic Right-Hand-Side (\Cref{app1-eq:quadratic-system},  $b = \frac{4}{5}$). Both metrics diverge as expected at the Basin boundary $x = \frac{1}{b}$. Furthermore, different reference points have been chosen to show that this shifts the resulting function only. In the symmetric log plot, the \rrt{} result for the linear system is a straight line and we can see how the quadratic term has an influence in comparison.}
	\label{app1-fig:rrt-D-quadratic}
\end{figure}	

\audic{} can be computed with the closed expression from \Cref{app1-eq:audic-1d}, too:
\begin{align}
D(x_0) &= - \frac{1}{b} \ln|1-bx_0|\quad x_0 \in \left[0, b\inv\right). \label{app1-eq:audic-1d-quadratic}
\end{align}
This result is depicted in \Cref{app1-fig:rrt-D-quadratic}. Particularly the divergence at $x \longrightarrow \frac{1}{b}$ is visible and due to reaching the basin boundary, analogously to \rrt{}.

At the attractor $x^* = 0$ \audic{} reaches $0$ because it never deviates from there, thus the cumulative distance vanishes.

\section{Rössler attractor}
\label{app1-sec:roessler-attractor}

\begin{figure*}
	\centering
	\includegraphics[width = 0.95\textwidth]{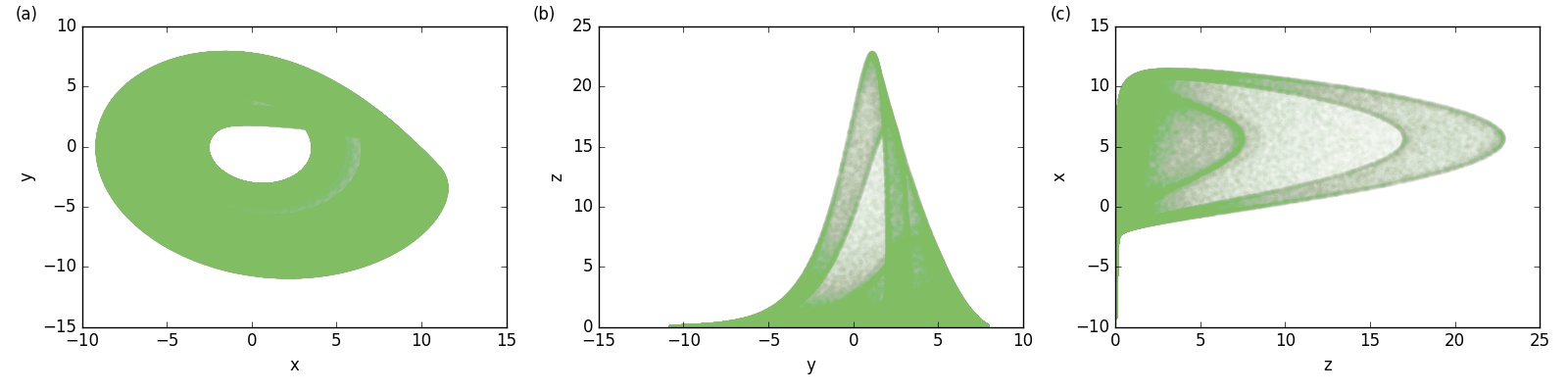}
	\caption{The three projections of the chaotic Rössler attractor for comparison with the phase space plots of \rrt{} and \audic{} in \Cref{fig:roessler-rrt,fig:roessler-audic}.}
	\label{app1-fig:roessler-attractor-projections}
\end{figure*}

\begin{figure}
	\centering
	\includegraphics[width = 0.95\columnwidth]{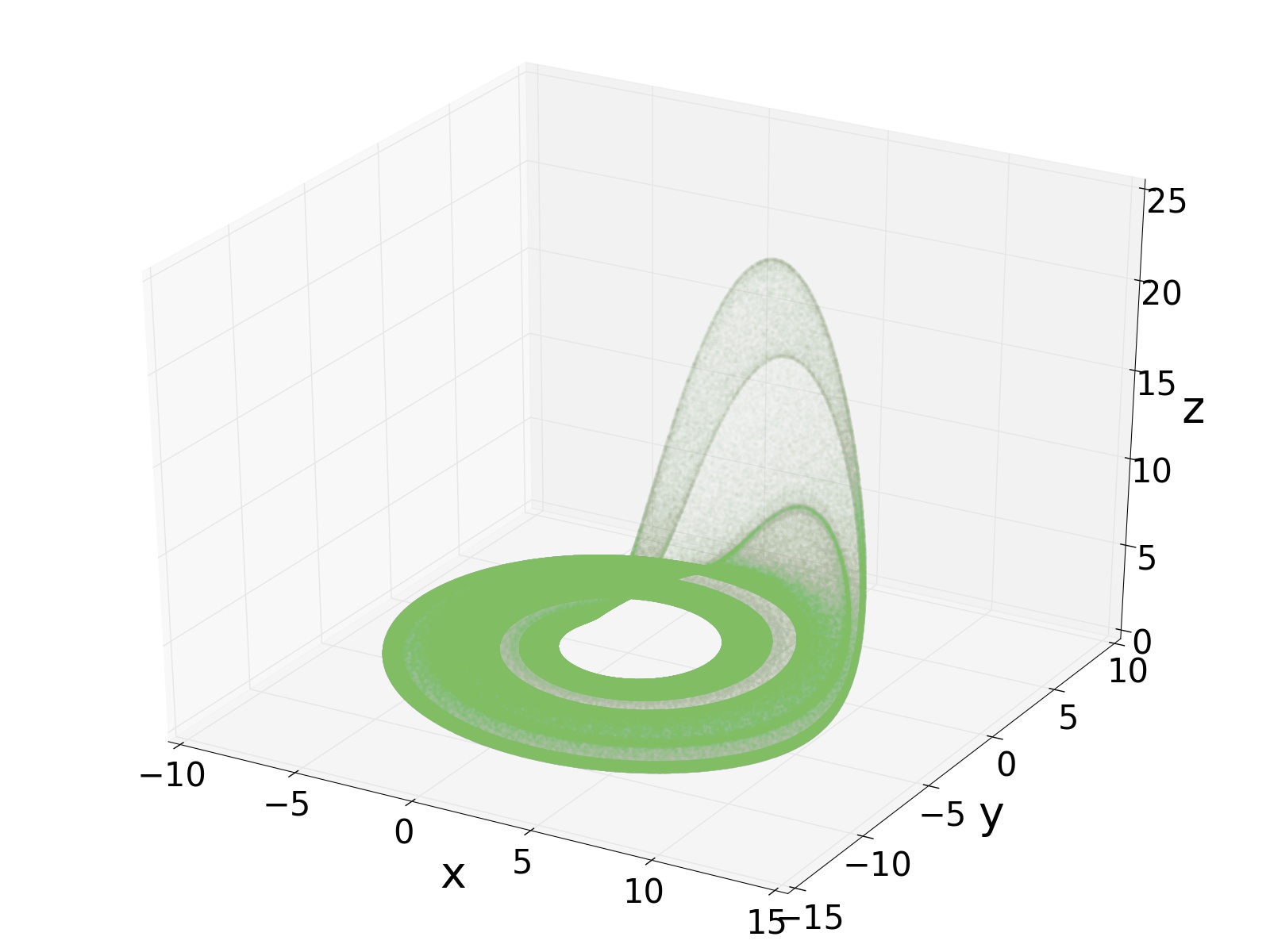}
	\caption{3D-plot of the chaotic Rössler attractor for comparison with the phase space plots of \rrt{} and \audic{} in \Cref{fig:roessler-rrt,fig:roessler-audic}.}
	\label{app1-fig:roessler-attractor-3d}
\end{figure}

For comparison with \Cref{fig:roessler-rrt,fig:roessler-audic}, the projections of the attractor has been plotted in \Cref{app1-fig:roessler-attractor-projections,app1-fig:roessler-attractor-3d}.

\section{Short Algorithm Description for Regularized Reaching Time}
\label{app1-sec:algo}

For complex systems like the Rössler attractor, a calculation of the \emph{Regularized Reaching Time} can get rather tricky.

The first problem is the estimation of the attractor itself. Finally, this could be solved starting at various different points and numerically integrate for a very long time. Removing the transient and then sampling the trajectories lead to a good estimate that could still fit in the available memory.

The second problem was the distance estimation. Fortunately, KD-Trees are exactly made for this and implemented in Scientific Python \cite{scipy}.

With these ingredients, \audic{} could be calculated. In order to estimate \rrt{}, the cumulative distances for the points along the trajectory were calculated backwards. This gives us the corresponding levelset of \audic{} for each point on the trajectory. With this, the times for entering different \audic{} levelsets could be retrieved and compared with the reference trajectory. Thus several values for \rrt{} were obtained and a limit could be estimated.

\section{Convergence of AuDiC}
\label{app1-sec:conv-audic}

The convergence of \audic{} depends on two elements: the distance function $d$ and the asymptotic approaching behavior of the trajectories. A common case with a mathematical distance function, i.e. a function fulfilling \Cref{app1-eq:dist-1,app1-eq:dist-2,app1-eq:dist-3,app1-eq:dist-4} \cite{heitzig2002mappings} and an exponentially stable attractor, the convergence can be proven right away as the bigger integral over the exponential envelope converges. 
\begin{subequations}
\begin{align}
\text{non-negativity} \quad &  d(x, y) \geq 0 \label{app1-eq:dist-1}\\
\text{identity of indiscernibles} \quad &  d(x, y) = 0 \Longleftrightarrow x = y \label{app1-eq:dist-2}\\
\text{symmetry} \quad & d(x, y) = d(y, x) \label{app1-eq:dist-3}\\
\text{triangle inequality} \quad & d(x, z) \leq d(x, y) + d(y, z)\label{app1-eq:dist-4}
\end{align}
\end{subequations}
Note the usage of the word mathematical distance (fulfilling the four properties). As we use the distance function only to measure how far a point is away from the attractor, more general functions could be used as well as long as they converge to $0$ when a trajectory approaches the attractor. Particularly, cost or damage functions that could be well motivated from the system's context are unlikely to always fulfill the requirements of a mathematical distance function.

Even assuming a mathematical distance function, convergence is not necessarily given. Systems that converge slower than exponentially could lead to a divergence in \audic{}. A simple example of such a case is $ \dot{x} = f(x) = \frac{-x^3}{2}$. The solutions are $\pm \frac{1}{\sqrt{c + t}}$ where the constant $c$ is fixed by the initial condition. Using the absolute values as the distance function gives a divergence for \audic{}, as $\int_0^\infty dt\ \frac{1}{\sqrt{c + t}} \longrightarrow \infty$.

\section{Convergence of \rrt}
\label{app1-sec:conv-rrt}

While the values of \rrt{} characterize the transient behavior of the trajectory, the existence of the limit in the definition \Cref{app1-eq:def-rrt} (\Cref{app1-eq:def-rrt}) depends actually on the asymptotic behavior.
\begin{align}
\rrt (x_0) := & \lim_{\epsilon \rightarrow 0} \left(t^\text{\audic}\left(x_0, \epsilon\right) - t^\text{\audic}\left(x_{ref}, \epsilon\right) \right). \label{app1-eq:def-rrt}
\end{align}

The simplest case is systems with finite reaching times because the RHS of \Cref{eq:def-rrt} can be split in two limits that converge separately. The results is the difference of the actual reaching times and is expected from the approach. Still, as the reaching times are finite anyway the complex approach with \rrt{} is not necessary and is just to show that it is reasonable in these cases, too.

For infinite reaching times, the values for $t^\text{\audic}$ will become increasingly large in the limit $\epsilon \rightarrow 0$. So the limit in \Cref{app1-eq:def-rrt} exists only, if for small $\epsilon$ (i.e. large times) the changes in $t^\text{\audic}\left(x_0, \epsilon\right)$ and $t^\text{\audic}\left(x_{ref}, \epsilon\right)$ will be about the same. This means, that for two different, small $\epsilon_1 > \epsilon_2$
\begin{align}
t^\text{\audic}\left(x_0, \epsilon_2\right) - &t^\text{\audic}\left(x_0, \epsilon_1\right) \approx \\
& t^\text{\audic}\left(x_{ref}, \epsilon_2\right) - t^\text{\audic}\left(x_{ref}, \epsilon_1\right) \nonumber
\end{align} 
Turning this interpretation around, it means that the trajectories have to behave ``similarly'' in the asymptotic limit, i.e. close to the attractor. 

The simplest case is a system with a hyperbolic fixed point where the larges eigenvalue of the corresponding Jacobian is real and of multiplicity 1. If that is the case, the calculation in \ref{app1-sec:linear-calculation} can be used locally around the attractor to understand why it converges and the precise proofs are in \ref{app2-sec:RRT}. Having a multiplicity larger than one might be mathematically interesting but is physically rather unlikely because some slight differences in the modeling of the system would usually change these. If this is a persistent property of the system, a precise understanding of the meaning is needed. Note that \rrt{} still converges but will depend on the underlying distance function. The latter is equivalent to problem (IV), non-invariance, and hence the result for these systems should be interpreted with care.

In the case of the largest eigenvalue being complex, the convergence can still be proven but there is a need for the choice of a specific distance function, as shown in \ref{app2-sec:RRT}. This is not problematic as the result is also invariant under change of variables and hence, simply the unique result that can be taken.

On the other hand, this suggest that our current definition might have to be improved as for more complex and higher-dimensional attractors we can currently rely only on numerics. The results for the Rössler attractor, particularly the strong sensitivity to the dynamics of the system as shown in \Cref{fig:roessler-bifurc} provide the numerical support for our current approach.

An improved version of the definition could be done using the properties as Lyapunov functions with constant negative orbital derivate. Even though we can give a rough outline of how to do that in the Discussion of the main paper, there are many subtle technicalities to be addressed in order to define an improved \rrt{} precisely.

\section{Precise Definitions and Theorems}
\label{app2-sec:definitions-and-theorems}
	
	We consider a deterministic dynamical system of the form
	\begin{equation}		\label{app2-eq:dynsys}
	\dot{x} = f(x),\qquad x\in\mathbb{R}^d,\quad f\in C^2(\mathbb{R}^d,\mathbb{R}^d)
	\end{equation}
	and assume that the system contains an exponentially asymptotically stable equilibrium $\mathcal{A}$. That is, $f(\mathcal{A}) = 0$ and all the eigenvalues of $Df(\mathcal{A})$ have negative real part.  We denote the basin of attraction of $\mathcal{A}$ by $\mathcal{B}_\mathcal{A}$. Also we denote the flow operator of \eqref{app2-eq:dynsys} as $\varphi(t,\cdot)$.
	
	Let the spectrum  of $Df(\mathcal{A})$ be $\lambda^s\cup \sigma^{ss}$, where $\lambda^s$ may be real or complex, but we assume has multiplicity one.  It will also be useful later to define constants $\alpha^s$ and $\alpha^{ss}$ such that
	\begin{equation*}
	\{\sigma^{ss}\} < \alpha^{ss} < \text{Re } \lambda^s < \alpha^s < 0,
	\end{equation*}
	where we also require $2|\alpha^s| > |\lambda^s|$.


	\subsection{Area under distance curve (AuDiC) function}	\label{app2-sec:AUDIC}
	
	Let $d(\cdot,\cdot)$ be a  metric defined on $\mathcal{B}_\mathcal{A}$. The AuDiC function is defined as
	\begin{equation}		\label{app2-eq:AuDiC}
	D(x_0) = \int_0^\infty d(x(t),\mathcal{A}) dt,
	\end{equation}
	where $x_0\in \mathcal{B}_\mathcal{A}$ and $x(t)$ is the solution to \eqref{app2-eq:dynsys} with $x(0)=x_0$. Under mild conditions on the metric $d$, the AuDiC function is a Lyapunov function.
	\begin{definition}			\label{app2-def:classK}
		A continuous function $\alpha: [0,\infty) \rightarrow [0,\infty)$ is a class $\mathcal{K}$ function if $\alpha(0) = 0$ and $\alpha$ is strictly monotonically increasing.
	\end{definition}
	Then the following result holds. We refer to \cite[Theorem 2.46]{giesl2007construction} for a proof.
	\begin{proposition}	\label{app2-prop:AUDICLyap}
		Let $d(\cdot,\mathcal{A})$ be a $C^1$ function and suppose that there is a class $\mathcal{K}$ function $\alpha$ such that $d(x,\mathcal{A}) \ge \alpha(||x-\mathcal{A}||_2)$ for all $x\in \mathcal{B}_\mathcal{A}$.
		
		Then $\frac{d}{dt} D(x) = -d(x,\mathcal{A})$ for $x\in \mathcal{B}_\mathcal{A}$. That is, the AuDiC function is a Lyapunov function with orbital derivative equal to $-d(x,\mathcal{A})$.
	\end{proposition}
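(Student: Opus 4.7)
The plan is to establish two claims in sequence: first, that the integral defining $D(x_0)$ converges for every $x_0 \in \mathcal{B}_\mathcal{A}$ so that $D$ is well-defined; second, that the orbital derivative equals $-d(x, \mathcal{A})$. The Lyapunov property then follows immediately, since the class $\mathcal{K}$ lower bound $d(x, \mathcal{A}) \geq \alpha(\|x - \mathcal{A}\|_2)$ forces $d(x, \mathcal{A}) > 0$ whenever $x \neq \mathcal{A}$, and positivity of $D$ on $\mathcal{B}_\mathcal{A}\setminus\{\mathcal{A}\}$ then comes from continuity of the integrand in $t$.

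For convergence I would exploit the exponential asymptotic stability of $\mathcal{A}$. Since every eigenvalue of $Df(\mathcal{A})$ has negative real part, a standard linearization argument produces a neighborhood $U$ of $\mathcal{A}$ and constants $C, \beta > 0$ such that $\|\varphi(t, y) - \mathcal{A}\|_2 \leq C e^{-\beta t}\|y - \mathcal{A}\|_2$ for all $y \in U$ and $t \geq 0$. For arbitrary $x_0 \in \mathcal{B}_\mathcal{A}$ the orbit enters $U$ at some finite time $T \geq 0$; on $[0, T]$ the trajectory remains in a compact subset of $\mathcal{B}_\mathcal{A}$ on which continuity of $d(\cdot, \mathcal{A})$ bounds the integrand, yielding a finite contribution. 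On $[T, \infty)$ the $C^1$ regularity of $d(\cdot, \mathcal{A})$ together with $d(\mathcal{A}, \mathcal{A}) = 0$ (since $d$ is a metric) delivers a local Lipschitz estimate $d(x, \mathcal{A}) \leq M \|x - \mathcal{A}\|_2$ on $\overline{U}$, so the integrand decays exponentially and is integrable.

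For the orbital derivative I would invoke the semigroup property of the flow to write
\[
D(\varphi(s, x_0)) = \int_0^\infty d(\varphi(t+s, x_0), \mathcal{A})\, dt = \int_s^\infty d(\varphi(\tau, x_0), \mathcal{A})\, d\tau,
\]
after the substitution $\tau = t + s$. Continuity of $d$ and of the flow makes the integrand continuous in $\tau$, so the fundamental theorem of calculus yields $\frac{d}{ds} D(\varphi(s, x_0)) = -d(\varphi(s, x_0), \mathcal{A})$. Specializing to $s = 0$ gives the claimed identity $\frac{d}{dt} D(x_0) = -d(x_0, \mathcal{A})$.

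I expect the main obstacle to be not the derivative step, which is essentially the fundamental theorem of calculus once recast via the semigroup, but rather the uniform control on the tail of the integral. This control relies crucially on exponential stability (through the linear bound on the flow) and on $d$ vanishing to first order at $\mathcal{A}$ (through the $C^1$ and class $\mathcal{K}$ hypotheses); both would fail for a merely asymptotically stable equilibrium, which is why exponential stability is built into the setup. This matches the strategy underlying \cite[Theorem 2.46]{giesl2007construction}.
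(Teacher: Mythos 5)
Your argument is correct and follows the standard converse-Lyapunov approach. The paper supplies no proof of its own for this proposition, only a pointer to \cite[Theorem 2.46]{giesl2007construction}, and your two steps---(a) convergence of the integral defining $D$ via local exponential estimates for the flow together with a Lipschitz bound $d(x,\mathcal{A})\le M\|x-\mathcal{A}\|_2$ from $C^1$-ness and $d(\mathcal{A},\mathcal{A})=0$, and (b) the semigroup identity $D(\varphi(s,x_0))=\int_s^\infty d(\varphi(\tau,x_0),\mathcal{A})\,d\tau$ followed by the fundamental theorem of calculus---are exactly what that reference carries out.

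One subtlety you leave implicit: the paper defines the orbital derivative as $\frac{d}{dt}D(x)=(\nabla D(x))^T f(x)$, which presupposes $D\in C^1(\mathcal{B}_\mathcal{A})$. Your semigroup computation gives the time derivative of $D$ \emph{along trajectories}; to identify this with $\nabla D\cdot f$ one must first know that $D$ is continuously differentiable in $x$, which does not come for free from the integral formula. This is in fact the technically substantive content of Theorem 2.46: differentiability of $D$ follows by differentiating under the integral sign, using $f\in C^2$ and exponential-decay estimates for $\partial\varphi/\partial x_0$ that hold uniformly in $t$, parallel to the convergence bounds you already set up. Once $C^1$-regularity of $D$ is in hand, your trajectory-derivative step closes the proof; so the gap is one of completeness rather than of method.
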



	\subsection{Regularized return time (RRT) function}			\label{app2-sec:RRT}
	
	In the following, we further assume that the function $d(\cdot, \mathcal{A})$ from equation \eqref{app2-eq:AuDiC} has been chosen such that $d(x, \mathcal{A}) = || x - \mathcal{A}||_\mathcal{N}$ for some norm $|| \cdot ||_\mathcal{N}$. Then it is clear from \eqref{app2-eq:AuDiC} that the AuDiC function defines a norm $|| \cdot ||_D$ given by $D(x_0) = || x_0 - \mathcal{A} ||_D$.
	
	For a given initial condition $x_r\in \mathcal{B}_\mathcal{A}$, we denote the time taken for an initial condition to enter and remain inside a $D$-ball of radius $\epsilon$ as
	\begin{equation*}
	t^\text{\audic}(x_0,\epsilon):=\inf\{T: ||\varphi(t,x_0) - \mathcal{A}||_{D} <\epsilon \textrm{ for all }t\ge T\}.
	\end{equation*}
	The \rrt{} function is then defined as follows.
	\begin{definition}[\rrt{} function]		\label{app2-def:RRT}
		For a given reference point $x_r\in \mathcal{B}_\mathcal{A}$, the \rrt{} function is defined as
		\begin{equation}	\label{app2-eq:RRT}
		\rrt{}(x_0; x_r) := \lim_{\epsilon\rightarrow0}\left[t^\text{\audic}(x_0,\epsilon) - t^\text{\audic}(x_r,\epsilon)\right],
		\end{equation}
		where the limit exists.
	\end{definition}
	
	
	A natural question is under what conditions  the limit in \eqref{app2-eq:RRT} exists. To answer this question we distinguish between  two cases according to whether $\lambda^s$ is real or complex.
	Our first result is regarding the existence of the \rrt{} function in both cases, and the dependence on the choice of norm $||\cdot||_{D}$. In order to state the result for $\lambda^s$ complex, we make the following definition.
	
	\begin{definition}
		We define the following equivalence class on  norms defined on $\mathcal{B}_\mathcal{A}$:
		\begin{equation}	\label{app2-eq:equivnorm}
		||\cdot||_{\mathcal{N}_1} \sim ||\cdot||_{\mathcal{N}_2} \quad \Leftrightarrow \quad ||v||_{\mathcal{N}_1} = ||v||_{\mathcal{N}_2} \textrm{ for all }v\in E^s,
		\end{equation}
		where $E^s$ is the invariant subspace corresponding to the leading eigenvalue of $Df(\mathcal{A})$. 
	\end{definition}
	
	Clearly, elements in the above equivalence class are defined by the norm of elements in $E^s$. 
	
	\begin{proposition}		\label{app2-prop:RRTnorm}
		Let the \rrt{} function be defined as in Definition \ref{app2-def:RRT} for the system \eqref{app2-eq:dynsys}, and assume $x_0\not\in W^{ss}(\mathcal{A})$. Then we have the following
		\begin{enumerate}
			\item When $\lambda^s$ is real, the limit \eqref{app2-eq:RRT} exists for all choices of norm $||\cdot||_{\mathcal{N}}$. Moreover, the limit is independent of the choice of norm.
			\item When $\lambda^s$ is complex, the limit \eqref{app2-eq:RRT} exists if and only if $||\cdot||_{D} \sim ||\cdot||_P$, where
			$|| x ||_P := ||P^{-1} x||_2$,  $||\cdot||_2$ is the Euclidean 2-norm, and $P^{-1}Df(\mathcal{A})P$ is the Jordan normal form of $Df(\mathcal{A})$. 
		\end{enumerate}
	\end{proposition}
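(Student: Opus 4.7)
The plan is to extract an asymptotic expansion of $\varphi(t,x_0)-\mathcal{A}$ as $t\to\infty$, plug it into the implicit equation $\|\varphi(t,x_0)-\mathcal{A}\|_D=\epsilon$ to read off $t^\audic(x_0,\epsilon)$, and then take the limit of the difference with $t^\audic(x_r,\epsilon)$. The main tool is the strong stable manifold theorem: $W^{ss}(\mathcal{A})$ is a smooth invariant manifold tangent to the strong stable eigenspace, and for $x_0\notin W^{ss}(\mathcal{A})$ the trajectory asymptotically aligns with the slow eigenspace $E^s$. I would use the strong stable foliation (or a direct variation-of-parameters argument near $\mathcal{A}$) to produce a well-defined nonzero slow amplitude $\pi^s(x_0)\in E^s$ with
\[
\varphi(t,x_0)-\mathcal{A} \;=\; e^{Df(\mathcal{A})t}\pi^s(x_0) + r(t,x_0), \qquad \|r(t,x_0)\|_D = O\bigl(e^{\alpha^{ss}t}\bigr),
\]
where the hypothesis $\alpha^{ss}<\Re\lambda^s$ forces the remainder to decay strictly faster than the leading term.

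In case (i), $\pi^s(x_0)=a(x_0)v$ for $v$ the real eigenvector and $a(x_0)\ne 0$, so
\[
\|\varphi(t,x_0)-\mathcal{A}\|_D = |a(x_0)|\,\|v\|_D\,e^{\lambda^s t}(1+o(1)).
\]
Inverting gives $t^\audic(x_0,\epsilon)=(1/\lambda^s)\log\bigl(\epsilon/(|a(x_0)|\,\|v\|_D)\bigr)+o(1)$; subtracting the same expression for $x_r$ makes the prefactor $\|v\|_D$ cancel, so
\[
\rrt(x_0;x_r) \;=\; \frac{1}{\lambda^s}\log\frac{|a(x_r)|}{|a(x_0)|}
\]
exists and depends only on the intrinsic slow-amplitude function $a(\cdot)$, not on the choice of norm.

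In case (ii), $\lambda^s=\mu+i\omega$ and with $w$ the corresponding complex eigenvector one has $\pi^s(x_0)=\Re(a(x_0)w)$, so
\[
\|\varphi(t,x_0)-\mathcal{A}\|_D = e^{\mu t}\,g_D(t;x_0)(1+o(1)),\quad g_D(t;x_0):=\bigl\|\Re\bigl(a(x_0)w\,e^{i\omega t}\bigr)\bigr\|_D,
\]
with $g_D(\cdot;x_0)$ periodic of period $2\pi/\omega$. In the Jordan basis the slow block acts as $e^{\mu t}R(\omega t)$ with $R$ a rotation, so $\|\cdot\|_P$ is rotation-invariant on $E^s$ and $g_P(\cdot;x_0)\equiv|a(x_0)|\,\|w\|_P$ is constant in $t$. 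When $\|\cdot\|_D\sim\|\cdot\|_P$ the two norms agree on $E^s$, so $g_D$ is likewise constant and the case-(i) argument applies verbatim. Otherwise $g_D(\cdot;x_0)$ is genuinely non-constant periodic; inverting $e^{\mu t}g_D(t;x_0)=\epsilon$ then injects a non-trivial periodic component (of period $2\pi/\omega$) into $t^\audic(x_0,\epsilon)$ as a function of $\log\epsilon$, and this component generically does not cancel that of $t^\audic(x_r,\epsilon)$, causing the limit in \eqref{app2-eq:RRT} to fail to exist.

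The hardest step will be establishing the asymptotic expansion with a remainder sharp enough to drive the transcendental inversion, and, in the complex case, verifying that the oscillatory dependence on $\log\epsilon$ is a genuine feature of $t^\audic$ rather than an artifact that the remainder $r(t,x_0)$ could smooth out. A clean implementation would be to first pass to coordinates in which $Df(\mathcal{A})$ is in real Jordan form and then use the smoothness of the strong stable foliation to conjugate the slow dynamics on $E^s$ to its linearization modulo $O(e^{\alpha^{ss}t})$ corrections; the remaining work then reduces to an explicit linear computation together with a careful implicit-function-theorem argument applied to the norm equation.
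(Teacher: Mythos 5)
Your approach is essentially the one the paper takes: extract the leading slow-direction amplitude of $\varphi(t,x_0)-\mathcal{A}$ via the strong stable foliation (this is precisely the paper's $\eta$ from Theorem~\ref{app2-thm:solutionrep}, due to Sandstede), invert the norm equation asymptotically in $t$, observe that the norm cancels in the one-dimensional real case, and in the complex case reduce to rotation-invariance of $\|\cdot\|_P$ on $E^s$. The one technical slip is your remainder bound $\mathcal{O}(e^{\alpha^{ss}t})$: nonlinear self-interaction of the slow mode produces contributions of size $e^{2\Re\lambda^s\,t}$, so the correct bound is $\mathcal{O}\bigl(e^{-\min\{|\alpha^{ss}|,\,2|\alpha^s|\}\,t}\bigr)$ — which is exactly why the paper imposes $2|\alpha^s|>|\lambda^s|$ — but this does not affect the conclusion since the remainder still decays strictly faster than the leading term, and the rest of your argument goes through as in the paper.
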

	
	We will also show that the \rrt{} function is closely related to the strong stable foliation $\mathcal{F}^{ss}$ in the basin of attraction of the equilibrium $\mathcal{A}$. We first recall the following definitions.
	
	\begin{definition}
		A  foliation $\mathcal{F}$ of an $d$-dimensional manifold $M$ is a partition of $M$ into a disjoint collection of $k$-dimensional injectively immersed connected submanifolds (called leaves) such that for each $x\in M$, there is a neighborhood $V\subset M$ and a chart
		\begin{equation*}
		\phi: V \rightarrow \mathbb{R}^k \times \mathbb{R}^{d-k},
		\end{equation*}
		such that each connected component of the intersection of a leaf of $\mathcal{F}$ with $V$ is mapped to the set $\mathbb{R}^k\times \{y\}$, for some  $y\in\mathbb{R}^{d-k}$.
		
		We call $\mathcal{F}$ a $C^r$ foliation if each local chart is $C^r$. A continuous foliation whose leaves are $C^r$ is called a $C^r$ lamination.
	\end{definition}
	
	We denote the leaf of a foliation through a point $x$ as $\mathcal{F}(x)$. A foliation $\mathcal{F}$ is invariant under the flow of \eqref{app2-eq:dynsys} if $\varphi(t,\mathcal{F}(x)) = \mathcal{F}({\varphi(t,x)})$ for sufficiently small $|t|$. 
	
	\begin{theorem}[\cite{hirsch1977invariant}]			\label{app2-thm:Fss}
		Consider the system \eqref{app2-eq:dynsys} and let $\mathbb{R}^d = E^s \oplus E^{ss}$ be the direct sum decomposition into the invariant stable and strong stable subspaces for the linear system $\dot{x} = Df(\mathcal{A})x$. Then there exists a unique invariant $C^r$ lamination $\mathcal{F}_{ss}$ in $\mathcal{B}_\mathcal{A}$, called strong stable foliation, such that each leaf of $\mathcal{F}_{ss}$ has dimension equal to $\text{dim } E^s$ and $\mathcal{F}_{ss}(\mathcal{A}) = W^{ss}(\mathcal{A})$, where $W^{ss}(\mathcal{A})$ is the strong stable manifold of $\mathcal{A}$.
		
		Solutions $x(t)$ and $y(t)$ that belong to the same leaf of $\mathcal{F}_{ss}$ for all time are characterized by strong asymptotic convergence to each other: $||x(t) - y(t)||_D \le Ce^{-\alpha^{ss}t}$ for $t$ sufficiently large.
	\end{theorem}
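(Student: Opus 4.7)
The plan is to apply Hirsch--Pugh--Shub invariant manifold theory in three stages: a local construction near $\mathcal{A}$, a global extension by the flow, and a uniqueness argument via the exponential convergence characterization. The key ingredient is the spectral gap assumption $\{\sigma^{ss}\} < \alpha^{ss} < \text{Re }\lambda^s < \alpha^s < 0$, which is precisely what makes a strong stable foliation well-defined.

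For the local stage, I would work in coordinates adapted to the splitting $\mathbb{R}^d = E^s \oplus E^{ss}$ that block-diagonalise $Df(\mathcal{A})$. For each base point $x$ in a small neighborhood $U$ of $\mathcal{A}$, the leaf through $x$ is sought as the graph of a $C^r$ function $h_x : E^{ss} \to E^s$. Applying the graph transform induced by $\varphi(1,\cdot)$ and exploiting the spectral gap, this operator becomes a contraction on a Banach space of sections whose fixed point yields the family $\{h_x\}$. Standard $C^r$ section theorem arguments then provide $C^r$ regularity of each leaf together with continuous dependence on $x$, so that the local object is a $C^r$ lamination; the leaf through $\mathcal{A}$ is the classical local strong stable manifold $W^{ss}_{\text{loc}}(\mathcal{A})$.

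To globalize, for any $x \in \mathcal{B}_\mathcal{A}$ there is some $T>0$ with $\varphi(T,x) \in U$; set $\mathcal{F}_{ss}(x) := \varphi(-T, \mathcal{F}^{\text{loc}}_{ss}(\varphi(T,x)))$. Invariance of the local foliation makes this independent of the chosen $T$, and the $C^r$ diffeomorphism $\varphi(-T,\cdot)$ transports the leaf structure intact. The exponential characterization follows because within $U$ two points lie on the same local leaf exactly when their trajectories satisfy $\|\varphi(t,y) - \varphi(t,x)\|_D \le C e^{-\alpha^{ss}t}$, and flowing backward by a fixed time only rescales $C$. Uniqueness then follows: any other invariant $C^r$ lamination of $\mathcal{B}_\mathcal{A}$ whose leaf through $\mathcal{A}$ is $W^{ss}(\mathcal{A})$ must satisfy the same exponential rate, and hence must coincide with $\mathcal{F}_{ss}$ locally and, by invariance of both laminations under the flow, globally.

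The main obstacle is establishing the global $C^0$ regularity of the foliation --- continuous dependence of entire leaves on the base point across $\mathcal{B}_\mathcal{A}$, not merely the $C^r$ regularity of individual leaves. Near $\mathcal{A}$ the graph-transform construction supplies this automatically because of the spectral gap on the linearisation, but along orbits far from equilibrium there is in general no uniform spectral gap; this is the reason the result is only a $C^r$ lamination rather than a $C^r$ foliation, and is precisely where the full strength of the Hirsch--Pugh--Shub section theorem is required to close the argument.
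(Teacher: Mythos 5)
The paper does not supply its own proof of Theorem \ref{app2-thm:Fss}; it is cited verbatim from Hirsch--Pugh--Shub \cite{hirsch1977invariant}. There is therefore no internal argument to compare against, but your sketch recapitulates exactly the standard graph-transform route of that reference: a fibered contraction near $\mathcal{A}$ furnished by the spectral gap $\{\sigma^{ss}\}<\alpha^{ss}<\text{Re }\lambda^s$, globalization to all of $\mathcal{B}_\mathcal{A}$ by backward flow (which is well-defined since any two choices of $T$ differ by a flow map preserving the local lamination), and the exponential convergence rate both as a characterization of leaves and as the hook for uniqueness. Your closing remark on why the transverse regularity is only $C^0$ --- so one obtains a $C^r$ lamination rather than a $C^r$ foliation --- correctly identifies the genuine subtlety and the place where the full strength of the Hirsch--Pugh--Shub $C^r$ section theorem is needed.

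One step deserves more care. You assert that any competing invariant $C^r$ lamination whose leaf through $\mathcal{A}$ equals $W^{ss}(\mathcal{A})$ ``must satisfy the same exponential rate.'' That does not follow merely from matching the single leaf at $\mathcal{A}$: a priori, the other leaves of a competitor could fail the $\alpha^{ss}$-contraction characterization. What actually closes the uniqueness argument is that continuity of the lamination, together with flow-invariance and the fixed leaf $W^{ss}(\mathcal{A})$, forces the leaves near $\mathcal{A}$ to be uniformly transverse to $E^s$; only then does the pseudo-hyperbolic estimate propagate to every leaf and pin down the lamination via the contraction mapping principle. If this sketch were expanded, that tangency/transversality argument needs to be made explicit rather than inferred. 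Separately (and consistent with your ansatz $h_x:E^{ss}\to E^s$), note that the leaves have dimension $\dim E^{ss}$, not $\dim E^s$ as the theorem statement reads --- a transcription slip in the paper, not a flaw in your argument.
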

	
	We will also prove the following result which provides an important characterization of the \rrt{} function.
	
	\begin{proposition}			\label{app2-prop:RRTfoliation}
		Let the \rrt{} function be defined as in Definition \ref{app2-def:RRT} for the system \eqref{app2-eq:dynsys}, and assume $x_r\not\in W^{ss}(\mathcal{A})$. In the case $\lambda^s$ is complex, we assume $||\cdot||_{D}\sim||\cdot||_P$ as in  Proposition \ref{app2-prop:RRTnorm}. Then the level sets of $\rrt{}(x_0;x_r)$ are equal to the leaves of $\mathcal{F}_{ss}$. That is,
		\begin{equation*}
		\rrt{}(x_0;x_r) = \rrt{}(y_0;x_r) \Leftrightarrow \mathcal{F}_{ss}(x_0) = \mathcal{F}_{ss}(y_0)
		\end{equation*}
		Furthermore, $\rrt{}(x_0;x_r)\rightarrow -\infty$ as $x_0$ approaches $\mathcal{F}_{ss}(\mathcal{A}) \,(=W^{ss}(\mathcal{A}))$.
	\end{proposition}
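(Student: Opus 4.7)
The plan is to reduce the equivalence to a sharp asymptotic analysis of $\varphi(t,x_0)$ as $t\to\infty$ and then to identify a single algebraic invariant that simultaneously parametrizes leaves of $\mathcal{F}_{ss}$ and controls the level sets of $\rrt{}$. First I would establish the existence, for every $x_0\in\mathcal{B}_\mathcal{A}$, of a unique leading coefficient $v^s(x_0)\in E^s$ with
$$
\varphi(t,x_0) - \mathcal{A} = e^{Df(\mathcal{A})t}\,v^s(x_0) + O\bigl(e^{(\alpha^{ss}+\delta)t}\bigr)
$$
for some $\delta>0$. In a neighborhood of $\mathcal{A}$ this follows either from Sternberg's smooth linearization (under standard non-resonance) or from classical invariant-manifold arguments exploiting the spectral gap $\alpha^{ss}<\text{Re}\,\lambda^s$; the estimate is then propagated to all of $\mathcal{B}_\mathcal{A}$ by pulling back with the inverse flow. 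The assignment $x_0\mapsto v^s(x_0)$ inherits continuity and flow-equivariance from the linearization.

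Next I would plug this expansion into $\|\cdot\|_D$. When $\lambda^s$ is real one has directly $\|e^{\lambda^s t}v^s(x_0)\|_D=e^{\lambda^s t}\|v^s(x_0)\|_D$; when $\lambda^s$ is complex the hypothesis $\|\cdot\|_D\sim\|\cdot\|_P$ makes the rotational part of $e^{Df(\mathcal{A})t}|_{E^s}$ an isometry of $\|\cdot\|_D$, so the same identity holds with $\lambda^s$ replaced by $\text{Re}\,\lambda^s$. Inverting $\epsilon=\|\varphi(t,x_0)-\mathcal{A}\|_D$ asymptotically gives
$$
t^{\audic}(x_0,\epsilon) = \frac{1}{\text{Re}\,\lambda^s}\log\frac{\epsilon}{\|v^s(x_0)\|_D} + o(1)
$$
as $\epsilon\to0$, and subtracting the corresponding formula for $x_r$ (well-defined because $x_r\notin W^{ss}(\mathcal{A})$ forces $v^s(x_r)\neq0$) cancels $\log\epsilon$ to yield the closed form
$$
\rrt{}(x_0;x_r) = \frac{1}{\text{Re}\,\lambda^s}\log\frac{\|v^s(x_r)\|_D}{\|v^s(x_0)\|_D}.
$$

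Third, I would match this to the foliation. By Theorem \ref{app2-thm:Fss}, two points share a leaf iff their trajectories differ by $O(e^{-\alpha^{ss}t})$, and the uniqueness of the leading expansion forces the corresponding coefficients $v^s(\cdot)$ to agree (in the complex case, up to the rotation generated by the flow on $E^s$, which the norm $\|\cdot\|_P$ treats as an isometry). Thus a leaf of $\mathcal{F}_{ss}$ is exactly a level set of $\|v^s(\cdot)\|_D$, and the closed-form expression above converts this into the claimed equivalence $\rrt{}(x_0;x_r)=\rrt{}(y_0;x_r)\Leftrightarrow\mathcal{F}_{ss}(x_0)=\mathcal{F}_{ss}(y_0)$. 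The divergence statement then comes for free: on $W^{ss}(\mathcal{A})$ we have $v^s(\cdot)=0$, so $\|v^s(x_0)\|_D\to 0$; since $\text{Re}\,\lambda^s<0$, the logarithm pushes $\rrt{}(x_0;x_r)\to -\infty$.

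The hardest part will be making the asymptotic expansion honest \emph{globally} on $\mathcal{B}_\mathcal{A}$ rather than only in a small neighborhood of $\mathcal{A}$, and with enough regularity that $v^s(\cdot)$ really parametrizes the strong stable foliation. A related subtlety is the complex-eigenvalue case: one must verify carefully that the flow's rotation in $E^s$ is absorbed by $\|\cdot\|_P$ so that the logarithmic formula is genuinely independent of the time-shift within a single trajectory, and hence so that the identification of $\|v^s(\cdot)\|_D$-level sets with leaves of $\mathcal{F}_{ss}$ is exact.
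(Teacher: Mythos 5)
Your proposal follows essentially the same route as the paper's proof: obtain a leading-order asymptotic representation of the solution near $\mathcal{A}$, extract from it a uniquely determined coefficient in $E^s$, derive the closed form $\rrt{}(x_0;x_r) = \frac{1}{\text{Re}\,\lambda^s}\ln\bigl(\|v^s(x_r)\|_D/\|v^s(x_0)\|_D\bigr)$, match level sets to leaves of $\mathcal{F}_{ss}$ via equality of that coefficient, and read off the divergence from $v^s\to 0$ on $W^{ss}(\mathcal{A})$. Your $v^s(\cdot)$ is the paper's $\eta(\cdot)$; where the paper cites Theorem~\ref{app2-thm:solutionrep} for the expansion and Lemma~\ref{app2-lem:Fsseta} to characterize leaves, you propose Sternberg linearization (or a spectral-gap invariant-manifold argument) plus Theorem~\ref{app2-thm:Fss} directly. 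These are interchangeable, so the two arguments are structurally the same.

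One subtlety affects both your argument and the paper's: $\rrt{}(x_0;x_r)=\rrt{}(y_0;x_r)$ only yields $\|v^s(x_0)\|_D=\|v^s(y_0)\|_D$, while being on the same leaf requires $v^s(x_0)=v^s(y_0)$ itself; these differ by a sign in the real case and by an arbitrary rotation in $E^s$ in the complex case. The paper's own linear example makes the discrepancy explicit: $\rrt{}(x)=\ln|x_1|$ has level set $\{|x_1|=c\}$, which is the disjoint union of the two leaves $\{x_1=c\}$ and $\{x_1=-c\}$. So the biconditional as stated is too strong; leaves are contained in $\rrt{}$-level sets and, in the real case, are their connected components, while in the complex case each level set contains a continuum of leaves. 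Your parenthetical ``up to the rotation generated by the flow'' actually identifies the source of the issue, but the following sentence (``a leaf is exactly a level set of $\|v^s(\cdot)\|_D$'') still asserts the full biconditional, inheriting the paper's overstatement rather than resolving it.
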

	
	The proof of Propositions \ref{app2-prop:RRTnorm} and \ref{app2-prop:RRTfoliation} rely on the following result regarding the behavior of solutions in the approach to equilibrium. We refer to \cite{sandstede1993verzweigungtheorie} for a proof.
	
	\begin{theorem}			\label{app2-thm:solutionrep}
		Consider the system \eqref{app2-eq:dynsys}, and define $\lambda^s$, $\alpha^s$ and $\alpha^{ss}$ as before. Then there exists $\kappa >0$ such that for all solutions $x(t)$ of \eqref{app2-eq:dynsys} in $\mathcal{B}_\mathcal{A}$ with $||x(0) - \mathcal{A}||_D < \kappa$, the limit
		\begin{equation}
		\eta(x(0)) := \lim_{t\rightarrow\infty} \Phi(0,t)P^s (x(t) - \mathcal{A})		\label{app2-eq:etadef}
		\end{equation}
		exists, where $\Phi(t,0)$ is the transition matrix of $\dot{x} = Df(\mathcal{A})x$ from $0$ to $t$ and $P^s$ is the projection onto $E^s$ along $E^{ss}$. Furthermore, we have the representation
		\begin{equation}
		x(t)-\mathcal{A} = \Phi(t,0)\eta(x(0)) + \mathcal{O}(e^{-\min\{|\alpha^{ss}|,2|\alpha^s|\}t}).	\label{app2-eq:solutionrep}
		\end{equation}
	\end{theorem}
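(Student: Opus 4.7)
The plan is to combine variation of constants with the spectral-gap estimates for the linear flow on $E^s$ and $E^{ss}$: the limit $\eta(x(0))$ arises as an absolutely convergent improper integral, and the representation \eqref{app2-eq:solutionrep} is then obtained by substituting the definition of $\eta$ back and estimating tails. First I would translate $\mathcal{A}$ to the origin and decompose $f(x) = Ax + g(x)$ with $A = Df(0)$ and $g(x) = O(\|x\|^2)$ by Taylor's theorem (using $f \in C^2$). Let $P^s, P^{ss}$ denote the spectral projections onto the invariant subspaces; standard linear theory gives $\|\Phi(t,0)P^s\| \leq C_1 e^{\alpha^s t}$ and $\|\Phi(t,0)P^{ss}\| \leq C_2 e^{\alpha^{ss} t}$ for $t \geq 0$, together with dual growth bounds for $\Phi(0,t)$. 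A Gronwall argument on the full variation-of-constants identity, combined with $\alpha^s > \text{Re}\lambda^s$, shows that shrinking $\kappa$ guarantees $\|x(t)\| \leq C_0 e^{\alpha^s t}$ for all $t \geq 0$ whenever $\|x(0)\|_D < \kappa$; in particular $\|g(x(t))\| = O(e^{2\alpha^s t})$.

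Next I would project the variation-of-constants formula onto $E^s$ and apply $\Phi(0,t)$ to obtain
\[
\Phi(0,t)P^s x(t) = P^s x(0) + \int_0^t \Phi(0,s) P^s g(x(s))\,ds.
\]
Since $\|\Phi(0,s) P^s\|$ grows no faster than $C e^{|\lambda^s| s}$ (this bound covers both the real-part growth and, in the complex case, any rotational contribution), the integrand is bounded by $C' e^{(|\lambda^s| - 2|\alpha^s|) s}$. The hypothesis $2|\alpha^s| > |\lambda^s|$ makes this exponent strictly negative, so the improper integral converges absolutely, and
\[
\eta(x(0)) := P^s x(0) + \int_0^\infty \Phi(0,s) P^s g(x(s))\,ds \;\in\; E^s
\]
is well defined as in \eqref{app2-eq:etadef}.

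To recover the asymptotic representation, I would substitute this definition of $\eta(x(0))$ back into the projected variation-of-constants identity to rewrite the $E^s$ component as
\[
P^s x(t) = \Phi(t,0)\eta(x(0)) - \int_t^\infty \Phi(t,s) P^s g(x(s))\,ds,
\]
and estimate the tail integral as $O(e^{2\alpha^s t}) = O(e^{-2|\alpha^s| t})$ by direct computation. The complementary $P^{ss} x(t)$ component is bounded separately by variation of constants: the bounds $\|\Phi(t,s) P^{ss}\| \leq C_2 e^{\alpha^{ss}(t-s)}$ and $\|g(x(s))\| = O(e^{2\alpha^s s})$ together yield $\|P^{ss} x(t)\| = O(e^{-\min\{|\alpha^{ss}|, 2|\alpha^s|\} t})$. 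Adding the two error contributions, and noting that $\Phi(t,0)\eta(x(0)) \in E^s$ already captures the leading behaviour, gives precisely the claimed remainder bound.

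The main technical difficulty is the a priori decay estimate established in the first paragraph: one has to choose $\kappa$ small enough that the quadratic nonlinearity $g$ cannot destroy the linear decay rate across the gap between $\alpha^s$ and $\text{Re}\lambda^s$, and the quantitative margin between $2|\alpha^s|$ and $|\lambda^s|$ must then be used to dominate the growth of $\Phi(0,s)P^s$ in the integrability estimate. In the complex-$\lambda^s$ case this extra margin is reflected in the hypothesis being stated with the modulus rather than only the real part of $\lambda^s$. Once these exponential-rate bookkeeping issues are settled, the remaining manipulations reduce to routine variation-of-constants estimates.
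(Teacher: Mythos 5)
The paper does not actually supply a proof for Theorem \ref{app2-thm:solutionrep}; it defers to Sandstede's thesis \cite{sandstede1993verzweigungtheorie}. Your argument is therefore being compared against a black-box citation rather than an in-paper proof, and as a self-contained derivation it holds up. The structure is the standard one for asymptotic representations near a hyperbolic equilibrium: a Gronwall bootstrap giving the a priori decay $\|x(t)-\mathcal{A}\| = O(e^{\alpha^s t})$ for small $\kappa$, the projected variation-of-constants identity $\Phi(0,t)P^s(x(t)-\mathcal{A}) = P^s(x(0)-\mathcal{A}) + \int_0^t \Phi(0,s)P^s g(x(s))\,ds$, absolute convergence of the improper integral from $2|\alpha^s| > |\lambda^s|$, and separate tail estimates on the $E^s$ and $E^{ss}$ components. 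Each of these steps is sound; in particular your exponent bookkeeping for the two cases $|\alpha^{ss}| \lessgtr 2|\alpha^s|$ is exactly what produces the $\min\{|\alpha^{ss}|,2|\alpha^s|\}$ rate, and your use of the modulus $|\lambda^s|$ rather than $|\text{Re}\,\lambda^s|$ in the growth bound for $\Phi(0,s)P^s$ matches the form of the hypothesis, even though the tighter real-part bound would already suffice.

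Two small points worth making explicit if you write this out fully. First, the theorem's smallness hypothesis is phrased in the $\|\cdot\|_D$ norm, and the Gronwall/Taylor estimates naturally use the Euclidean norm; you should note that these are equivalent on a neighborhood of $\mathcal{A}$ so the choice of $\kappa$ can be made in either. Second, the Gronwall step as sketched relies on the gap $\alpha^s > \text{Re}\,\lambda^s$ to absorb the quadratic term; the clean way to run it is to set $z(t) = e^{-\alpha^s t}\|x(t)-\mathcal{A}\|$, note the linear part contributes a decaying factor $e^{(\text{Re}\lambda^s - \alpha^s)t}$, and close a continuation argument when $\kappa$ is small enough. Neither of these affects the correctness of the argument, and nothing else is missing relative to the statement being proved (the paper's remark that $\eta$ is continuous is an additional conclusion drawn from Sandstede's proof, not part of the theorem statement).
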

	
	Note that since $\Phi(0,t)$ leaves $E^s$ invariant and $E^s$ is closed, we have $\eta(x(0))\in E^s$. It also follows from the proof in \cite{sandstede1993verzweigungtheorie} that $\eta:B_{\kappa}(\mathcal{A})\rightarrow E^s$ is continuous.
	
	\begin{lemma}			\label{app2-lem:Fsseta}
		Let $x(t),y(t)$ be solutions to \eqref{app2-eq:dynsys} in $\mathcal{B}_\mathcal{A}$.
		Then $x(t),y(t)$ belong to the same leaf of $\mathcal{F}_{ss}$ for all $t$ if and only if $\eta(x(s)) = \eta(y(s))$ for $s$ sufficiently large. Furthermore, $\eta(x(s)) = 0$ if and only if $x(t)\in W^{ss}(\mathcal{A})$.
	\end{lemma}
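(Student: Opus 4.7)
The plan is to reduce the lemma to the representation formula \eqref{app2-eq:solutionrep} together with the characterization of leaves of $\mathcal{F}_{ss}$ via strong asymptotic convergence from Theorem~\ref{app2-thm:Fss}. A preliminary step I would carry out first is to establish that $\eta$ is equivariant along orbits: substituting $\tau=t-s$ into the definition \eqref{app2-eq:etadef} applied to the solution starting at $x(s)$ yields $\eta(x(s)) = \Phi(s,0)\eta(x(0))$. Since $\Phi(s,0)$ is invertible, the condition ``$\eta(x(s))=\eta(y(s))$ for $s$ sufficiently large'' is then equivalent to $\eta(x(0))=\eta(y(0))$, once $s$ is large enough that both solutions lie in the neighborhood $B_\kappa(\mathcal{A})$ where $\eta$ is defined.

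For the forward implication, I would assume $x(t)$ and $y(t)$ lie in the same leaf of $\mathcal{F}_{ss}$, so that by Theorem~\ref{app2-thm:Fss} they satisfy strong asymptotic convergence at the fast rate $|\alpha^{ss}|$. Applying \eqref{app2-eq:solutionrep} to each and subtracting yields
\begin{equation*}
x(t) - y(t) = \Phi(t,0)\bigl[\eta(x(0))-\eta(y(0))\bigr] + \mathcal{O}\bigl(e^{-\min\{|\alpha^{ss}|,\,2|\alpha^s|\}\,t}\bigr).
\end{equation*}
The multiplicity-one hypothesis on $\lambda^s$ ensures that $\Phi(t,0)$ restricted to $E^s$ has sharp asymptotic decay rate exactly $|\text{Re}\,\lambda^s|$. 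Because $\alpha^{ss}<\text{Re}\,\lambda^s$ and $2|\alpha^s|>|\lambda^s|\ge|\text{Re}\,\lambda^s|$, both the required convergence rate and the error-term rate strictly exceed $|\text{Re}\,\lambda^s|$; a nonzero leading term would therefore dominate the left-hand side asymptotically and contradict the assumed fast decay. This forces $\eta(x(0))=\eta(y(0))$.

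The reverse implication is more direct: if $\eta(x(0))=\eta(y(0))$, the subtracted representation collapses to $\|x(t)-y(t)\|_D = \mathcal{O}(e^{-\min\{|\alpha^{ss}|,\,2|\alpha^s|\}\,t})$, which is strong asymptotic convergence at a rate strictly exceeding $|\text{Re}\,\lambda^s|$. By the uniqueness part of Theorem~\ref{app2-thm:Fss}, whose leaves are exactly the equivalence classes of orbits exhibiting such convergence, $x(t)$ and $y(t)$ must lie in the same leaf. The second statement of the lemma then follows at once by applying the first part with $y(t)\equiv\mathcal{A}$: one reads off from \eqref{app2-eq:etadef} that $\eta(\mathcal{A})=0$, and Theorem~\ref{app2-thm:Fss} identifies $\mathcal{F}_{ss}(\mathcal{A})$ with $W^{ss}(\mathcal{A})$.

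The main obstacle I anticipate is making the sharp decay-rate comparison fully rigorous when $\lambda^s$ is complex. In that case $\Phi(t,0)$ restricted to $E^s$ carries an oscillatory factor, so one cannot hope for a pointwise lower bound of the form $\|\Phi(t,0)v\|_D \ge c\,e^{\text{Re}(\lambda^s)\,t}\|v\|$ in an arbitrary norm. I would handle this either by working in the adapted norm $\|\cdot\|_P$ from Proposition~\ref{app2-prop:RRTnorm}, in which the estimate holds exactly on $E^s$, or by passing to a lim sup along a subsequence $t_n\to\infty$ that realizes the sharp exponential rate. Once this comparison is secured, the remaining steps are routine applications of Theorems~\ref{app2-thm:solutionrep} and~\ref{app2-thm:Fss}.
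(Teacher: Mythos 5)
Your proposal is correct and follows essentially the same route as the paper's proof: invoke the leaf characterization from Theorem~\ref{app2-thm:Fss}, subtract the representation \eqref{app2-eq:solutionrep} for $x$ and $y$, and note that a nonzero $\eta(x(s))-\eta(y(s))\in E^s$ would force the slower decay rate $|\text{Re}\,\lambda^s|$ for $\|x(t)-y(t)\|_D$, contradicting the required strong convergence at rate $|\alpha^{ss}|>|\text{Re}\,\lambda^s|$; the converse and the second statement then follow as you describe. The one point worth flagging is that your anticipated obstacle in the complex case is not actually an issue: since $E^s$ is finite-dimensional and $\Phi(t,0)\vert_{E^s}$ is $e^{\text{Re}\lambda^s\,t}$ times an operator that is bounded uniformly in $t$ (a rotation in the $P$-coordinates), equivalence of norms on $E^s$ already gives a $t$-uniform lower bound $\|\Phi(t,0)v\|_D\ge c\,e^{\text{Re}\lambda^s\,t}\|v\|_D$ for some $c>0$ and all $v\in E^s$, so the sharp rate comparison goes through in any norm without switching to $\|\cdot\|_P$ or passing to subsequences.
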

	
	\begin{proof}
		By Theorem \ref{app2-thm:Fss}, the solutions  $x(t)$ and $y(t)$ belong to the same leaf of $\mathcal{F}_{ss}$ for all $t\ge 0$ if and only if $||x(t) - y(t)||_D \le Ce^{-\alpha^{ss}t}$. Let $s>0$ be large enough so that $||x(s) - \mathcal{A}||_D,||y(s) - \mathcal{A}||_D<\kappa$. Now from Theorem \ref{app2-thm:solutionrep} we have $\eta(x(s)),\eta(y(s))\in E^s$, and equation \eqref{app2-eq:solutionrep} implies that this is possible if and only if $\eta(x(s)) = \eta(y(s))$. The last statement follows directly from \eqref{app2-eq:solutionrep} and the theory of stable/unstable manifolds.
	\end{proof}

	\vspace{8mm}
	
	\paragraph*{Proof of Proposition \ref{app2-prop:RRTnorm}.} Let $x_0$ and $x_r$ be as in Definition \ref{app2-def:RRT} and let $x(t),\hat{x}(t)$ be the solutions to \eqref{app2-eq:dynsys} with $x(0)=x_0$ and $\hat{x}(0) = x_r$. Let $s>0$ be large enough so that $||x(s) - \mathcal{A}||_D, ||\hat{x}(s) - \mathcal{A}||_D<\kappa$, and let $\epsilon>0$ be small. Then from Theorem \ref{app2-thm:solutionrep}, the solution $x(t)$ intersects the boundary of the $D$-ball $B_\epsilon(\mathcal{A})$ when
	\begin{equation}				\label{app2-eq:epsiloneq}
	|| \Phi(t,s)\eta(x(s)) + \mathcal{O}(e^{-\min\{|\alpha^{ss}|,2|\alpha^s|\}t}) ||_{D}   =\epsilon 	
	\end{equation}
	
	
	We first consider the case where $\lambda^s$ is real. Then $\eta(x(s))\in E^s$ is one dimensional and we obtain
	\begin{eqnarray}			
	t - \frac{1}{\lambda^s}\ln\epsilon &= &  -  \frac{1}{\lambda^s}\ln ||\eta(x(s)) + g(t)||_{D} =:F_{re}(t),\label{app2-eq:epsiloneqreal1}\\
	t - \frac{1}{\lambda^s}\ln\epsilon &= &  -  \frac{1}{\lambda^s}\ln ||\eta(\hat{x}(s)) + \hat{g}(t)||_{D} =: \hat{F}_{re}(t),\label{app2-eq:epsiloneqreal2}
	\end{eqnarray}
	where $g(t),\hat{g}(t) = \mathcal{O}(e^{-\delta t})$ for some $\delta >0$. 
	For $\epsilon>0$ sufficiently small,  equations \eqref{app2-eq:epsiloneqreal1} and \eqref{app2-eq:epsiloneqreal2} can be uniquely solved for ${t}(\epsilon)$ and $\hat{t}(\epsilon)$ respectively, and  ${t}(\epsilon),\hat{t}(\epsilon)\rightarrow\infty$ as $\epsilon\rightarrow 0$. We refer to Figure \ref{app2-fig:real} for a sketch of the solutions to equations \eqref{app2-eq:epsiloneqreal1} and \eqref{app2-eq:epsiloneqreal2}.
	Then $t^\text{\audic}_{}(x_0,\epsilon) = t(\epsilon)$ and ${t}^\text{\audic}(x_r, \epsilon) = \hat{t}(\epsilon)$ and we have
	{\small
		\begin{eqnarray}
	t^\text{\audic}_{}(x_0,\epsilon) - t^\text{\audic}_{}(x_r,\epsilon) &=&  \frac{1}{\lambda^s} \ln {\textstyle\frac{||\eta(\hat{x}(s)) - \hat{g}(\hat{t}(\epsilon))||_{D}}{||\eta({x}(s)) - {g}({t}(\epsilon))||_{D}}}	\nonumber \\
	\lim_{\epsilon\rightarrow 0}\left[t^\text{\audic}_{}(x_0,\epsilon) - t^\text{\audic}_{}(x_r,\epsilon)\right] &= & \frac{1}{\lambda^s} \ln{\textstyle \frac{||\eta(\hat{x}(s))||_{D}}{||\eta({x}(s))||_{D}}}.		\label{app2-eq:RRTreal}
	\end{eqnarray}
	}
	Recall $\eta(\hat{x}(s)),\eta(x(s))\in E^s$ is one dimensional, so then we have
	\begin{equation*}
	\rrt{}(x_0;x_r) = \frac{C(x_0)}{\lambda^s}
	\end{equation*}
	for some $C(x_0)\in\mathbb{R}$, which is independent of the norm $||\cdot||_{D}$ (and hence $||\cdot||_\mathcal{N}$). This proves the first part of Proposition \ref{app2-prop:RRTnorm}.
	
	\begin{figure*}	
		\subfloat[Schematic diagram illustrating solutions of \eqref{app2-eq:epsiloneqreal1} and \eqref{app2-eq:epsiloneqreal2}.]{
			\psfrag{X}{$\Gamma$}
			\begin{overpic}[width=0.9\columnwidth,natwidth=876,natheight=658]{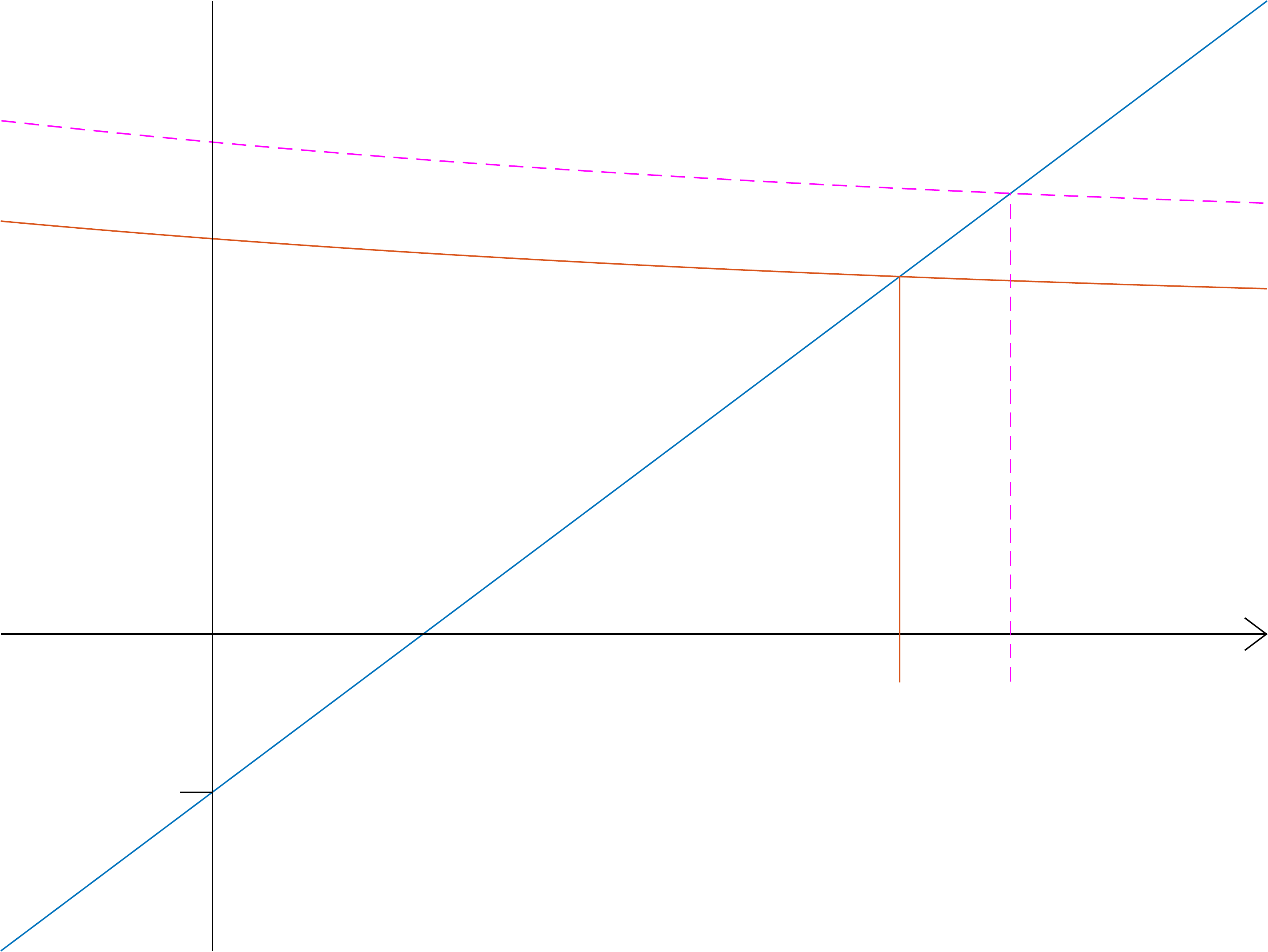}
				\put(2.7,26){$\mathcal{B}_\mathcal{A}$}
				\put(0,12){\footnotesize{$ - \frac{1}{\lambda^s}\ln \epsilon$}}
				\put(13,27){\small{$0$}}
				\put(100,21){\small{$t$}}
				\put(68,17){\small{${t}(\epsilon)$}}
				\put(77,17){\small{$\hat{t}(\epsilon)$}}
				\put(93,48){\footnotesize{$F_{re}(t)$}}
				\put(93,61){\footnotesize{$\hat{F}_{re}(t)$}}
			\end{overpic}
			\label{app2-fig:real}}	
		\hfill	
		\subfloat[Schematic diagram illustrating solutions of \eqref{app2-eq:complexteq1} and \eqref{app2-eq:complexteq2} for a choice of $||\cdot||_{\mathcal{N}}$ not in the equivalence class of $||\cdot||_P$, so that $t(\epsilon)-\hat{t}(\epsilon)$ oscillates as $\epsilon\rightarrow0$ (see text).]{	
			\psfrag{X}{$\Gamma$}
			\begin{overpic}[width=0.9\columnwidth,natwidth=691,natheight=590]{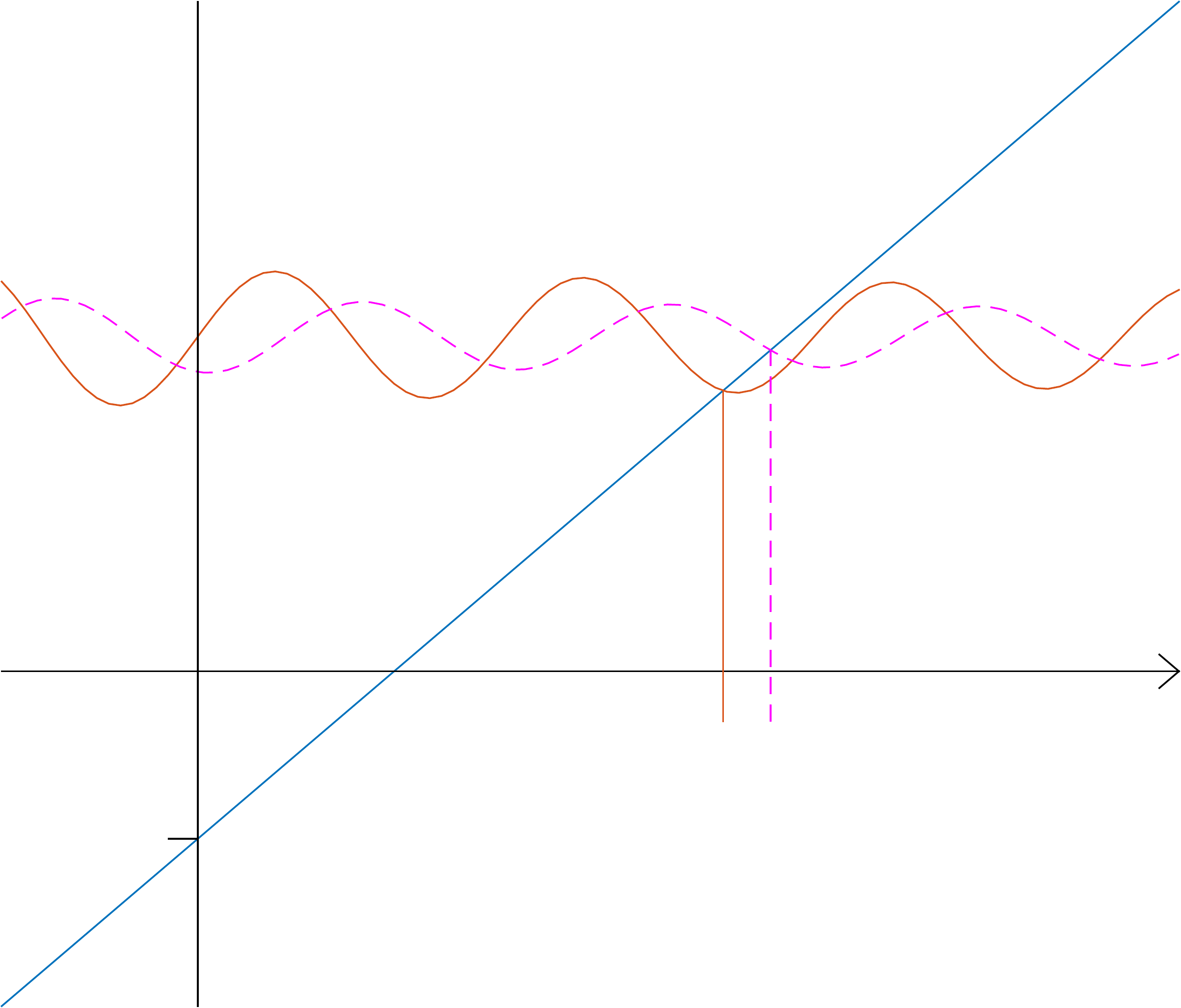}
				\put(-5,13.5){\footnotesize{$ - \frac{1}{\text{Re }\lambda^s}\ln \epsilon$}}
				\put(13,30){\small{$0$}}
				\put(100,21){\small{$t$}}
				\put(57,20){\small{${t}(\epsilon)$}}
				\put(65,20){\small{$\hat{t}(\epsilon)$}}
				\put(96,62){\footnotesize{$F_{com}(t)$}}
				\put(96,50){\footnotesize{$\hat{F}_{com}(t)$}}
			\end{overpic}
			\hskip 0.08 \columnwidth
			\label{app2-fig:complex}}
		\caption{Schematic Diagrams illustrating the different kinds of solutions.}
	\end{figure*}

	Now assume that $\lambda^s$ is complex. Then equation \eqref{app2-eq:epsiloneq} leads to
	{\small
	\begin{align}	 		
	&\begin{aligned}
	t - \frac{1}{\text{Re }\lambda^s}\ln\epsilon=&   - \frac{1}{\text{Re }\lambda^s}\ln \left|\left|P\Psi(t,s)P^{-1}\eta(x(s)) + g(t)\right|\right|_{D}\\
	=&: F_{com}(t),
	\end{aligned}	\label{app2-eq:complexteq1}\\
	&\begin{aligned}
	t - \frac{1}{\text{Re }\lambda^s}\ln\epsilon =&   - \frac{1}{\text{Re }\lambda^s}\ln \left|\left|P\Psi(t,s)P^{-1}\eta(\hat{x}(s)) + \hat{g}(t)\right|\right|_{D}\\
	 =&: \hat{F}_{com}(t),
	\end{aligned}	\label{app2-eq:complexteq2}
	\end{align}
}
	where 
	$P^{-1} Df(\mathcal{A})P$ is the Jordan normal form of $Df(\mathcal{A})$, and $e^{\mu t}\Psi(t,s)$ is the transition matrix of $\dot{w}=P^{-1} Df(\mathcal{A})P w$ from $s$ to $t$.  Again $g(t),\hat{g}(t) = \mathcal{O}(e^{-\delta t})$ for some $\delta>0$. However now the term $P\Psi(t,s)P^{-1}\eta(x(s))$ oscillates as $t\rightarrow\infty$, although it will remain bounded. Then for $\epsilon>0$ small enough, the above equation will have a (not necessarily unique) solution for $t$. We define $t(\epsilon)$ as the largest such solution for given $\epsilon$, then again $t(\epsilon)\rightarrow\infty$ as $\epsilon\rightarrow 0$ and $t^\text{\audic}_{}(x_0,\epsilon) = t(\epsilon)$. We refer to Figure \ref{app2-fig:complex} for a sketch of the solutions to equations \eqref{app2-eq:complexteq1} and \eqref{app2-eq:complexteq2}.
	Then we have 
	\begin{equation*}
	\begin{aligned}
	&t^\text{\audic}_{}(x_0,\epsilon) - t^\text{\audic}_{}(x_r,\epsilon) \\
	&\quad = \frac{1}{\text{Re }\lambda^s}\ln\frac{ \left|\left|P\Psi(\hat{t}(\epsilon),s)
		P^{-1}\eta(\hat{x}(s)) + \hat{g}(\hat{t}(\epsilon))\right|\right|_{D}}{ \left|\left|P\Psi(t(\epsilon),s)
		P^{-1}\eta(x(s)) + g(t(\epsilon))\right|\right|_{D}}
	\end{aligned}
	\end{equation*}

	Note that $||\Psi(t,s) P^{-1}\eta(x(s))||_2 = ||P^{-1}\eta(x(s))||_2$ since $\eta(x(s))\in E^s$. Now if $||\cdot||_{D} \sim ||\cdot||_P$ as in Proposition \ref{app2-prop:RRTnorm}, then  we have
	\begin{equation}
	\begin{aligned}
	&\lim_{\epsilon\rightarrow 0}\left[t^\text{\audic}_{}(x_0,\epsilon) - t^\text{\audic}_{}(x_r,\epsilon)\right] \\
	& \qquad \qquad \qquad = \frac{1}{\text{Re }\lambda^s}\ln\frac{||P^{-1} \eta(\hat{x}(s))||_2}{||P^{-1} \eta({x}(s))||_2}
	\end{aligned}		\label{app2-eq:RRTcomplex}
	\end{equation}
	so the \rrt{} function is well defined. However, for any other choice of norm $||\cdot||_{D}$, the quantity $||P\Psi({t}(\epsilon),s)
	P^{-1}\eta({x}(s))||_{D}$ will not converge to a constant value, and  will oscillate as $t(\epsilon)\rightarrow\infty$. In particular, for any $\eta(x(s))$ with $\eta(x(s))\ne C \eta(\hat{x}(s))$, the function $t^\text{\audic}_{}(x_0,\epsilon) - t^\text{\audic}_{}(x_r,\epsilon)$ will not converge as $\epsilon\rightarrow 0$. This proves the second part of Proposition \ref{app2-prop:RRTnorm}.\hfill\qed
	
	\paragraph*{Proof of Proposition \ref{app2-prop:RRTfoliation}.} From Lemma \ref{app2-lem:Fsseta}, we have that 
	\begin{equation*}
	x_0,y_0 \in \mathcal{F}_{ss}(x_0)  \quad\Leftrightarrow\quad \eta(x(s)) = \eta(y(s))
	\end{equation*}
	for $s$ sufficiently large. From \eqref{app2-eq:RRTreal} and \eqref{app2-eq:RRTcomplex} we have
	\begin{equation*}
	\rrt{}(x_0;x_r) - \rrt{}(y_0;x_r) =  \frac{1}{\text{Re }\lambda^s} \ln \frac{||\eta(y(s))||_{D}}{||\eta({x}(s))||_{D}}.		
	\end{equation*}
	and so it follows that
	\begin{equation*}
	\rrt{}(x_0;x_r) = \rrt{}(y_0;x_r)\quad\Leftrightarrow\quad  x_0,y_0 \in \mathcal{F}_{ss}(x_0). 
	\end{equation*}
	The final statement of Proposition \ref{app2-prop:RRTfoliation} follows from $\eta(x(s))=0 \Leftrightarrow x(t)\in W^{ss}(\mathcal{A})$ (see Lemma \ref{app2-lem:Fsseta}) and the fact that $\eta:B_\kappa(\mathcal{A})\rightarrow E^s$ is continuous. \hfill\qed

\section*{References}
\bibliography{mybib}

\providecommand{\newblock}{}
\begin{thebibliography}{10}
\expandafter\ifx\csname url\endcsname\relax
  \def\url#1{{\tt #1}}\fi
\expandafter\ifx\csname urlprefix\endcsname\relax\def\urlprefix{URL }\fi
\providecommand{\eprint}[2][]{\url{#2}}

\bibitem{lenton2011early}
Lenton T~M 2011 {\em Nature Climate Change\/} {\bf 1} 201--209

\bibitem{scheffer2009early}
Scheffer M, Bascompte J, Brock W~A, Brovkin V, Carpenter S~R, Dakos V, Held H,
  Van~Nes E~H, Rietkerk M and Sugihara G 2009 {\em Nature\/} {\bf 461} 53--59

\bibitem{anderies2013topology}
Anderies J~M, Carpenter S~R, Steffen W and Rockstr{\"o}m J 2013 {\em
  Environmental Research Letters\/} {\bf 8} 044048

\bibitem{yuan2003solution}
Yuan Y, Kubokawa J and Sasaki H 2003 {\em Power Systems, IEEE Transactions
  on\/} {\bf 18} 1094--1102

\bibitem{rossler1976equation}
R{\"o}ssler O~E 1976 {\em Physics Letters A\/} {\bf 57} 397--398

\bibitem{giesl2007construction}
Giesl P 2007 {\em Construction of global Lyapunov functions using radial basis
  functions\/} (Springer)

\bibitem{fiutak1980transient}
Fiutak J and Mizerski J 1980 {\em Zeitschrift f{\"u}r Physik B Condensed
  Matter\/} {\bf 39} 347--352

\bibitem{tang1975transient}
Tang C, Telle J and Ghizoni C 1975 {\em Applied Physics Letters\/} {\bf 26}
  534--537

\bibitem{barkema1994transient}
Barkema G, Marko J and De~Boer J 1994 {\em EPL (Europhysics Letters)\/} {\bf
  26} 653

\bibitem{castellano2009statistical}
Castellano C, Fortunato S and Loreto V 2009 {\em Reviews of modern physics\/}
  {\bf 81} 591

\bibitem{chowdhury2000statistical}
Chowdhury D, Santen L and Schadschneider A 2000 {\em Physics Reports\/} {\bf
  329} 199--329

\bibitem{krapivsky2010kinetic}
Krapivsky P~L, Redner S and Ben-Naim E 2010 {\em A kinetic view of statistical
  physics\/} (Cambridge University Press)

\bibitem{van2007long}
Van~Geest G, Coops H, Scheffer M and van Nes E 2007 {\em Ecosystems\/} {\bf 10}
  37--47

\bibitem{hastings2004transients}
Hastings A 2004 {\em Trends in Ecology \& Evolution\/} {\bf 19} 39--45

\bibitem{schaffer1993transient}
Schaffer W~M, Kendall B, Tidd C~W and Olsen L~F 1993 {\em Mathematical Medicine
  and Biology\/} {\bf 10} 227--247

\bibitem{fisher1989disequilibrium}
Fisher F~M 1989 {\em Disequilibrium foundations of equilibrium economics\/} 6
  (Cambridge University Press)

\bibitem{fischer2005epileptic}
Fisher R~S, Boas W~v~E, Blume W, Elger C, Genton P, Lee P and Engel J 2005 {\em
  Epilepsia\/} {\bf 46} 470--472 ISSN 1528-1167

\bibitem{van2016constrained}
van Kan A, Jegminat J, Donges J~F and Kurths J 2016 {\em Phys. Rev. E\/} {\bf
  93}(4) 042205

\bibitem{kuznetsov2013elements}
Kuznetsov Y~A 2013 {\em Elements of applied bifurcation theory\/} vol 112
  (Springer Science \& Business Media)

\bibitem{scheffer2012anticipating}
Scheffer M, Carpenter S~R, Lenton T~M, Bascompte J, Brock W, Dakos V, Van
  De~Koppel J, Van De~Leemput I~A, Levin S~A, Van~Nes E~H {\em et~al.\/} 2012
  {\em Science\/} {\bf 338} 344--348

\bibitem{menck2013basin}
Menck P~J, Heitzig J, Marwan N and Kurths J 2013 {\em Nature Physics\/} {\bf 9}
  89--92

\bibitem{menck2014dead}
Menck P~J, Heitzig J, Kurths J and {Joachim Schellnhuber} H 2014 {\em Nature
  Communications\/} {\bf 5:3969} 1--8 ISSN 2041-1723

\bibitem{klinshov2015stability}
Klinshov V~V, Nekorkin V~I and Kurths J 2015 {\em New Journal of Physics\/}
  {\bf 18} 013004

\bibitem{hellmann2015survivability}
Hellmann F, Schultz P, Grabow C, Heitzig J and Kurths J 2016 {\em Scientific
  reports\/} {\bf 6:29654} 1--12

\bibitem{mitra2015integrative}
Mitra C, Kurths J and Donner R~V 2015 {\em Scientific reports\/} {\bf 5:16196}
  1--10

\bibitem{nolting2011grundkurs}
Nolting W 2011 {\em Grundkurs {Theoretische} {Physik} 3\/} Springer-{Lehrbuch}
  (Berlin, Heidelberg: Springer Berlin Heidelberg) ISBN 978-3-642-13448-7
  978-3-642-13449-4

\bibitem{cvitanovic2016chaos}
Cvitanovi{\'c} P, Artuso R, Mainieri R, Tanner G and Vattay G 2016 {\em Chaos:
  Classical and Quantum\/} (Copenhagen: Niels Bohr Inst.)
  \urlprefix\url{http://ChaosBook.org/}

\bibitem{mauroy2013isostables}
Mauroy A, Mezi{\'c} I and Moehlis J 2013 {\em Physica D: Nonlinear Phenomena\/}
  {\bf 261} 19--30

\bibitem{josic2006isochron}
Josic K, Shea-Brown E~T and Moehlis J 2006 {\em Scholarpedia\/} {\bf 1} 1361

\bibitem{heck2016esd}
Heck V {\em et~al.\/} 2016 {\em Earth System Dynamics (in prep.)\/}

\bibitem{schultz2014detours}
Schultz P, Heitzig J and Kurths J 2014 {\em New Journal of Physics\/} {\bf 16}
  125001

\bibitem{zgliczynski1997computer}
Zgliczynski P 1997 {\em Nonlinearity\/} {\bf 10} 243

\bibitem{barrio2011qualitative}
Barrio R, Blesa F, Dena A and Serrano S 2011 {\em Computers \& Mathematics with
  Applications\/} {\bf 62} 4140--4150

\bibitem{rockstrom2009planetary}
Rockstr{\"{o}}m J, Steffen W~L, Noone K, Persson b, {Chapin III} F~S, Lambin E,
  Lenton T~M, Scheffer M, Folke C, Schellnhuber H~J and Others 2009 {\em
  Ecology and Society\/} {\bf 14}

\bibitem{steffen2015planetary}
Steffen W, Richardson K, Rockstr{\"o}m J, Cornell S~E, Fetzer I, Bennett E~M,
  Biggs R, Carpenter S~R, de~Vries W, de~Wit C~A {\em et~al.\/} 2015 {\em
  Science\/} {\bf 347} 1259855

\bibitem{havlin2012challenges}
Havlin S, Kenett D~Y, Ben-Jacob E, Bunde A, Cohen R, Hermann H, Kantelhardt J,
  Kert{\'e}sz J, Kirkpatrick S, Kurths J {\em et~al.\/} 2012 {\em The European
  Physical Journal Special Topics\/} {\bf 214} 273--293

\bibitem{giesl2015review}
Giesl P and Hafstein S 2015 {\em Discrete and Continuous Dynamical Systems,
  Series B\/} {\bf 20}(8) 2291--2331

\bibitem{python}
Van~Rossum G and Drake~Jr F~L 1995 {\em Python reference manual\/} (Centrum
  voor Wiskunde en Informatica Amsterdam)

\bibitem{numpy}
Ascher D, Dubois P~F, Hinsen K, Hugunin J, Oliphant T {\em et~al.\/} 2001
  Numerical python

\bibitem{scipy}
Jones E, Oliphant T, Peterson P {\em et~al.\/} 2001 {SciPy}: Open source
  scientific tools for {Python} [Online; accessed 2016-05-10]
  \urlprefix\url{http://www.scipy.org/}

\bibitem{heitzig2002mappings}
Heitzig J 2002 {\em Mappings Between Distance Sets Or Spaces\/} Ph.D. thesis
  Universit{\"a}t Hannover

\bibitem{hirsch1977invariant}
Hirsch M~W, Shub M and Pugh C~C 1977 {\em Invariant manifolds\/} ({\em Lecture
  Notes in Mathematics\/} vol 587) (Springer)

\bibitem{sandstede1993verzweigungtheorie}
Sandstede B 1993 {\em Verzweigungtheorie homokliner Verdopplungen\/} Ph.D.
  thesis University of Stuttgart

\end{thebibliography}

\end{document}